\documentclass[11pt,a4paper]{article}
\usepackage{indentfirst,mathrsfs}
\usepackage{amsfonts,amsmath,amssymb,amsthm}
\usepackage{latexsym,amscd}
\usepackage{amsbsy}
\usepackage[all]{xy}
\usepackage{graphicx}
\usepackage{threeparttable}
\usepackage[usenames,dvipsnames,svgnames,table]{xcolor}
\usepackage{cite}

            % 不按节标号
%=============定理,引理,命题连续编号=======================

%\newtheorem{thm}[theorem]{Lemma}
%\newtheorem{lem}[theorem]{Lemma}
%\newtheorem{cor}[theorem]{Corollary}
%\newtheorem{conj}[theorem]{Conjecture}
%\newtheorem{prop}[theorem]{Proposition}
%\newtheorem{prob}[theorem]{Problem}
%\newtheorem{defn}[theorem]{Definition}
%\newtheorem{example}[theorem]{Example}
%\newtheorem{remark}[theorem]{Remark}

%=============定理,引理,命题独立编号=======================

\newtheorem{thm}{Theorem}
\newtheorem{lem}{Lemma}

\newtheorem{example}{Example}

\parskip 5pt
\setlength{\oddsidemargin}{0cm} \setlength{\evensidemargin}{0cm}
\setlength{\textwidth}{16cm} \setlength{\textheight}{21cm}
\begin{document}
\date{}
\title{AMDS Symbol-Pair Codes from Repeated-Root Cyclic Codes}
\maketitle
\begin{center}
\author{\large Junru Ma %\quad\quad Jinquan Luo
\footnote{Corresponding author
\par
Junru Ma is with Faculty of Mathematics and Statistics, Hubei Key Laboratory of Applied Mathematics, Hubei University, Wuhan 430062, Hubei, China.%\\
%\par
%Jinquan Luo is with School of Mathematics and Statistics, Hubei Key Laboratory of Mathematical Sciences, Central China Normal University, Wuhan 430079, Hubei, China. \\
\par
E-mails: majunru@hubu.edu.cn(J.Ma).%, luojinquan@mail.ccnu.edu.cn(J.Luo).
}}
\end{center}

\begin{quote}
{\small {\bf Abstract:} \ \
Symbol-pair codes are proposed to guard against pair-errors in symbol-pair read channels.
The minimum symbol-pair distance is of significance in determining the error-correcting capability of a symbol-pair code.
One of the central themes in symbol-pair coding theory is the constructions of symbol-pair codes with largest possible minimum pair distance.
Maximum distance separable (MDS) and  almost maximum distance separable (AMDS) symbol-pair codes are optimal and sub-optimal regarding to the Singleton bound, respectively.
In this paper, six new classes of AMDS symbol-pair codes are explicitly constructed through repeated-root cyclic codes and one class of such codes has unbounded lengths and the minimum symbol-pair distance can reach $13$.
%{\color{red}with $d_p\in\{4,\,6,\,7,\,8\}$}
}

{\small {\bf Keywords:} \ \ MDS symbol-pair codes, \ AMDS symbol-pair codes, \ minimum symbol-pair distance, \ constacyclic codes, \ repeated-root cyclic codes}

\end{quote}

\section{Introduction}

Due to the development of modern high-density data storage systems, the reading process may be lower than that of the process used to store the data.
Symbol-pair code was proposed by Cassuto and Blaum to protect against pair-errors over symbol-pair channels in \cite{CB1}.
Cassuto and Blaum studied symbol-pair codes on pair-error correctability conditions, code construction, decoding methods and asymptopic bounds in \cite{CB1,CB2}.
Later, Cassuto and Litsyn \cite{CL} established that codes for pair-errors exist with strictly higher rates compared to codes for the Hamming metric with the same relative distance.
From now on,
%For doubles of years, researchers heavily investigated
researchers further investigated symbol-pair codes, including the construction of symbol-pair codes \cite{CJKWY,CKW,CLL,DGZZZ,DNS,EGY,KZL,KZZLC,LG,ML1}, some decoding algorithms of symbol-pair codes \cite{HMH,LXY,MHT,THM,YBS} and the symbol-pair weight distribution of some linear codes \cite{DNSS1,DNSS2,DWLS,ML2,SZW}.

%applied the discrete Fourier transform to construct cyclic symbol-pair codes and gave asymptotic bounds on code rates.
%Furthermore, they also showed that codes for pair-errors exist with strictly higher rates compared to codes for the Hamming metric with the same relative distance.

The minimum symbol-pair distance plays an important role in determining the error-correcting capability of a symbol-pair code.
It is shown in  \cite{CB1} that a code $\mathcal{C}$ with minimum symbol-pair distance $d_{p}$ can correct up to $\lfloor\frac{d_{p}-1}{2}\rfloor$ symbol-pair errors.
In 2012, Chee et al. \cite{CKW} derived a Singleton bound on symbol-pair codes.
Compared with classical error-correcting codes, the symbol-pair codes achieving the Singleton bound are called MDS symbol-pair codes.
AMDS symbol-pair codes are sub-optimum regarding to the Singleton bound, which yields that it is valuable to construct AMDS symbol-pair codes.
%Recently, the construction of MDS symbol-pair codes has attracted the attention of many researchers.
Compared with other linear codes, cyclic codes are easier to realize coding and decoding. Cyclic codes are one of the powerful tools to construct symbol-pair codes with largest minimum symbol-pair distance.
In this paper, six new classes of AMDS symbol-pair codes with $d_p\in\{4,\,6,\,7,\,8,\,12,\,13\}$ on repeated-root cyclic codes are derived and one class of such codes has unbounded lengths.
%As our best knowledge, it is the first time for the explicit construction of AMDS symbol-pair codes.

The rest of this paper is organized as follows. In Section $2$, we introduce some basic notations and results on symbol-pair codes and constacyclic codes.
By means of repeated-root cyclic codes, we investigate AMDS symbol-pair codes in Section $3$.
In Section $4$, we make some conclusions.

\section{Preliminaries}

In this section, we review some basic notations and results on symbol-pair codes and constacyclic codes, which will be applied in the sequel.

\subsection{Symbol-pair Codes}

Let $q=p^{m}$ and $\mathbb{F}_{q}$ denote the finite field with $q$ elements, where $p$ is a prime and $m$ is a positive integer.
Throughout this paper, let $\star$ be an element in $\mathbb{F}_{q}^{*}$ and $\mathbf{0}$ denotes the all-zero vector.
Let $n$ be a positive integer.
From now on, we always take the subscripts modulo $n$.
For any vector $\mathbf{x}=\left(x_{0}, x_{1}, \cdots, x_{n-1}\right)$ in $\mathbb{F}_{q}^{n}$, the symbol-pair read vector of $\mathbf{x}$ is defined by
\begin{equation*}
  \pi\left(\mathbf{x}\right)=\left(\left(x_{0},\,x_{1}\right),
  \,\left(x_{1},\,x_{2}\right),\cdots,\left(x_{n-2},\,x_{n-1}\right),
  \,\left(x_{n-1},\,x_{0}\right)\right).
\end{equation*}
%Observe that every vector $\mathbf{x}\in \mathbb{F}_{q}^{n}$ has a unique pair representation $\pi(\mathbf{x})$.
Denote by $\mathbb{Z}_{n}$ the residue class ring $\mathbb{Z}/n\mathbb{Z}$.
The {\it symbol-pair weight} of $\mathbf{x}$ is
\begin{equation*}
  w_{p}\left(\mathbf{x}\right)=\left|\left\{i\in \mathbb{Z}_{n}\,\big|\,\left(x_{i},\,x_{i+1}\right)\neq \left(0,\,0\right)\right\}\right|.
\end{equation*}
For any $\mathbf{x},\,\mathbf{y}\in \mathbb{F}_{q}^{n}$, the {\it symbol-pair distance} between $\mathbf{x}$ and $\mathbf{y}$ is
\begin{equation*}
  d_{p}\left(\mathbf{x},\,\mathbf{y}\right)=\left|\left\{i\in \mathbb{Z}_{n}\,\big|\,\left(x_{i},\,x_{i+1}\right)\neq \left(y_{i},\,y_{i+1}\right)\right\}\right|.
\end{equation*}
A code $\mathcal{C}$ is said to have {\it minimum symbol-pair distance} $d_{p}$ if
\begin{equation*}
  d_{p}={\rm min}\left\{d_{p}\left(\mathbf{x},\,\mathbf{y}\right)\,\big|\, \mathbf{x},\,\mathbf{y}\in \mathcal{C},\mathbf{x}\neq\mathbf{y}\right\}.
\end{equation*}
%An $\left(n,\,M,\,d_p\right)_q$ symbol-pair code $\mathcal{C}$ is a nonempty subset of $\mathbb{F}_{q}^{n}$ with $M$ elements and minimum symbol-pair distance $d_{p}$.
Elements of $\mathcal{C}$ are called {\it codewords} in $\mathcal{C}$.
It is shown in \cite{CB1,CB2} that for any $0<d_H(\mathcal{C})<n$,
\begin{equation}\label{eqdhdp}
  d_H(\mathcal{C})+1\leq d_{p}\left(\mathcal{C}\right)\leq 2\cdot d_H(\mathcal{C}).
\end{equation}

Similar to classical error-correcting codes, the size of a symbol-pair code satisfies the following Singleton bound.

\begin{lem}{\rm(\,\cite{CJKWY}\,)}\label{Singleton}
Let $q\geq 2$ and $2\leq d_{p}\leq n$.
If $\mathcal{C}$ is a symbol-pair code with length $n$ and minimum symbol-pair distance $d_{p}$, then $\left|\mathcal{C}\right|\leq q^{n-d_{p}+2}$.
\end{lem}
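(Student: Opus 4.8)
The plan is to adapt the classical projection/puncturing argument for the ordinary Singleton bound, being careful about how the symbol-pair metric behaves under deletion of coordinates. First I would fix a symbol-pair code $\mathcal{C}\subseteq\mathbb{F}_q^n$ with minimum symbol-pair distance $d_p$, and choose a set $S\subseteq\mathbb{Z}_n$ of $d_p-2$ consecutive coordinates (say $S=\{0,1,\dots,d_p-3\}$). Let $\varphi:\mathbb{F}_q^n\to\mathbb{F}_q^{n-(d_p-2)}$ be the map that deletes the coordinates in $S$. The claim $|\mathcal{C}|\le q^{n-d_p+2}$ is equivalent to showing $\varphi$ is injective on $\mathcal{C}$, since the image lies in a set of size $q^{n-d_p+2}$.

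The key step is the injectivity of $\varphi|_{\mathcal{C}}$. Suppose $\mathbf{x},\mathbf{y}\in\mathcal{C}$ with $\varphi(\mathbf{x})=\varphi(\mathbf{y})$, so $\mathbf{x}$ and $\mathbf{y}$ agree on all coordinates outside $S$. Set $\mathbf{z}=\mathbf{x}-\mathbf{y}$; then $\mathbf{z}$ is supported inside $S$, i.e. $z_i=0$ for all $i\notin\{0,1,\dots,d_p-3\}$. I then want to bound $d_p(\mathbf{x},\mathbf{y})=w_p(\mathbf{z})$ from above. By definition, $w_p(\mathbf{z})$ counts the indices $i\in\mathbb{Z}_n$ with $(z_i,z_{i+1})\ne(0,0)$. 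Such an $i$ forces $z_i\ne 0$ or $z_{i+1}\ne 0$, hence $i\in S$ or $i+1\in S$; that is, $i$ lies in the "closed neighbourhood" $S\cup(S-1)=\{-1,0,1,\dots,d_p-3\}$, a set of size $d_p-1$. Therefore $w_p(\mathbf{z})\le d_p-1<d_p$. Since the minimum symbol-pair distance is $d_p$, this forces $\mathbf{z}=\mathbf{0}$, i.e. $\mathbf{x}=\mathbf{y}$. Hence $\varphi|_{\mathcal{C}}$ is injective and $|\mathcal{C}|\le q^{n-d_p+2}$.

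The main subtlety—really the only place one must be careful—is the cyclic wrap-around in the definition of $w_p$: deleting a block of $t$ coordinates affects $t+1$ of the pair-coordinates (the $t$ pairs starting inside the block plus the one pair that straddles the left boundary), which is exactly why one can only kill $d_p-2$ symbols rather than $d_p-1$ as in the Hamming case. Choosing the deleted coordinates to be \emph{consecutive} is what keeps this neighbourhood as small as $d_p-1$; a scattered set of $d_p-2$ coordinates could touch up to $2(d_p-2)$ pair-positions and would not give the bound. Once the consecutive-block choice is made, the counting above is immediate and the rest of the argument is the standard pigeonhole on the injective image. One should also note the hypothesis $2\le d_p\le n$ guarantees $0\le d_p-2\le n-1$, so the block $S$ and the resulting codomain $\mathbb{F}_q^{n-d_p+2}$ both make sense; the edge case $d_p=2$ gives the trivial bound $|\mathcal{C}|\le q^n$.
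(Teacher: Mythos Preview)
Your argument is correct. Note, however, that the paper does not actually prove this lemma: it is quoted from \cite{CJKWY} as a known result, so there is no in-paper proof to compare against. Your proof is precisely the standard puncturing argument given in the cited source (and in \cite{CKW}): delete a block of $d_p-2$ \emph{consecutive} coordinates and observe that two codewords agreeing on the remaining $n-d_p+2$ coordinates would be at symbol-pair distance at most $d_p-1$, contradicting the minimum distance. One tiny remark: you phrase the distance estimate via $\mathbf{z}=\mathbf{x}-\mathbf{y}$ and $w_p(\mathbf{z})$, which tacitly uses the additive structure of $\mathbb{F}_q$; for an arbitrary alphabet of size $q$ (as the lemma is stated) the same bound follows by arguing directly that $(x_i,x_{i+1})\neq(y_i,y_{i+1})$ forces $i\in S\cup(S-1)$, without passing through a difference vector. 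This is cosmetic and does not affect the validity of your proof.
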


The symbol-pair code achieving the Singleton bound is called a {\it maximum distance separable} (MDS) symbol-pair code.
For a linear code of length $n$, dimension $k$ and minimum symbol-pair distance $d_p$, if $d_p=n-k+1$, then it is called an {\it almost maximum distance separable} (AMDS) symbol-pair code.

\subsection{Constacyclic Codes}

In this subsection, we review some basic concepts of constacyclic codes.
For any $\eta\in\mathbb{F}_{q}^{*}$, the $\eta$-constacyclic shift $\tau_{\eta}$ on $\mathbb{F}_{q}^n$ is defined as
\begin{equation*}
  \tau_{\eta}\left(x_0,\,x_1,\cdots,x_{n-1}\right)
  =\left(\eta\,x_{n-1},\,x_0,\cdots,x_{n-2}\right).
\end{equation*}
A linear code $\mathcal{C}$ is an {\it$\eta$-constacyclic code} if $\tau_{\eta}\left(\mathbf{c}\right)\in\mathcal{C}$ for any $\mathbf{c}\in\mathcal{C}$.
An $\eta$-constacyclic code is a {\it cyclic code} if $\eta=1$.
Each codeword $\mathbf{c}=\left(c_0,\,c_1,\cdots,c_{n-1}\right)\in\mathcal{C}$ can be identified with a polynomial
\begin{equation*}
  c(x)=c_0+c_1\,x+\cdots+c_{n-1}\,x^{n-1}.
\end{equation*}
In this paper, we always regard the codeword $\mathbf{c}$ in $\mathcal{C}$ as the corresponding polynomial $c(x)$.
%Indeed, a linear code $\mathcal{C}$ is an $\eta$-constacyclic code if and only if it is an ideal of the principle ideal ring $\mathbb{F}_{q}[x]/\langle x^n-\eta\rangle$.
%Consequently, 
There is a unique monic polynomial $g(x)\in\mathbb{F}_{q}[x]$ with $g(x)\,|\,\left(x^n-\eta\right)$ and
\begin{equation*}
  \mathcal{C}=\left\langle g(x)\right\rangle=\left\{f(x)\,g(x)\,\left({\rm mod}\,x^n-\eta\right)\,\big|\,f(x)\in\mathbb{F}_{q}\left[x\right]\right\}.
\end{equation*}
We refer $g(x)$ as the {\it generator polynomial} of $\mathcal{C}$ and the dimension of $\mathcal{C}$ is $n-{\rm deg}\left(g(x)\right)$.
%More information on constacyclic codes is referred to \cite{CDFL} for details.

%An $\eta$-constacyclic code of length $n$ over $\mathbb{F}_{q}$ is called a {\it simple-root} constacyclic code if $n$ and $p$ are relatively co-prime and a {\it repeated-root} constacyclic code if $p\,|\,n$.
%Note that simple-root constacyclic codes can be characterized by their defining sets.
% %\cite{CLZ,KZLquantum}.
%Furthermore, the BCH bound and the Hartmann-Tzeng bound for simple-root cyclic codes can be obtained by calculating the consecutive roots of the generator polynomial \cite{HP,MS}.
%However, repeated-root cyclic codes cannot be directly characterized by sets of zeros.

Let $\mathcal{C}=\left\langle g(x)\right\rangle$ be a repeated-root cyclic code of length $lp^e$ over $\mathbb{F}_{q}$ with ${\rm gcd}\left(l,\,p\right)=1$ and
\begin{equation*}
  g(x)=\prod_{i=1}^{r}m_{i}(x)^{e_i}
\end{equation*}
the factorization of $g(x)$ into distinct monic irreducible polynomials $m_i(x)\in\mathbb{F}_q\left[x\right]$ of multiplicity $e_i$.
For any $0\leq t\leq p^e-1$, we denote $\overline{\mathcal{C}}_{t}$ the simple-root cyclic code of length $l$ over $\mathbb{F}_q$ with generator polynomial
\begin{equation*}
  \overline{g}_t(x)=\prod_{1\leq i\leq r,\,e_i>t}m_i(x).
\end{equation*}
If this product is $x^{l}-1$, then $\overline{\mathcal{C}}_{t}$ contains only the all-zero codeword and we set $d_H(\overline{\mathcal{C}}_{t})=\infty$.
If all $e_i(1\leq i\leq s)$ satisfy $e_i\leq t$, then $\overline{g}_t(x)=d_H(\overline{\mathcal{C}}_{t})=1$.

The following lemma indicates that the minimum Hamming distance of $\mathcal{C}$ can be obtained by the minimum Hamming distance $d_{H}(\overline{\mathcal{C}}_{t})$, which is useful to derive the minimum Hamming distance of codes in Sections $3$.

\begin{lem}{\rm(\,\cite{CMSS}\,)}\label{lemdistance}
Let $\mathcal{C}$ be a repeated-root cyclic code of length $lp^e$ over $\mathbb{F}_{q}$, where $l$ and $e$ are positive integers with ${\rm gcd}\left(l,\,p\right)=1$.
Then
\begin{equation}\label{eqdistance}
  d_{H}(\mathcal{C})={\rm min}\left\{P_{t}\cdot d_{H}\left(\overline{\mathcal{C}}_{t}\right)\,\big|\,0\leq t\leq p^e-1\right\}
\end{equation}
where
\begin{equation}\label{eqpt}
  P_{t}=w_{H}\left((x-1)^t\right)=\prod_{i}\left(t_{i}+1\right)
\end{equation}
with $t_{i}$'s being the coefficients of the $p$-adic expansion of $t$.
\end{lem}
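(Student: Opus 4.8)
I would prove the two inequalities separately, reducing both to how the Hamming weight interacts with multiplication by $(x^{l}-1)^{t}$. Two elementary facts are the engine. First, since $\operatorname{char}\mathbb{F}_{q}=p$ and $t=\sum_{i}t_{i}p^{i}$ in base $p$, one has $(x-1)^{t}=\prod_{i}(x^{p^{i}}-1)^{t_{i}}$, and on expanding the right-hand side the exponents that occur are precisely the integers $\sum_{i}s_{i}p^{i}$ with $0\le s_{i}\le t_{i}$ — pairwise distinct, with coefficients nonzero by Lucas — so $w_{H}((x-1)^{t})=\prod_{i}(t_{i}+1)=P_{t}$. Second, for $\deg a<l$ we get $w_{H}\!\left(a(x)(x^{l}-1)^{t}\right)=w_{H}(a)\,P_{t}$, because $(x^{l}-1)^{t}$ arises from $(x-1)^{t}$ via $x\mapsto x^{l}$, hence has $P_{t}$ terms at distinct multiples of $l$, and multiplying by $a$ keeps the length-$l$ blocks $[lk,lk+l-1]$ separate. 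I also record that, representing codewords by polynomials of degree $<n$, $\mathcal{C}=\langle g\rangle$ is exactly the set of degree-$<n$ multiples of $g$, that every such polynomial has a unique expansion $c(x)=\sum_{j=0}^{p^{e}-1}a_{j}(x)(x^{l}-1)^{j}$ with $\deg a_{j}<l$ (a triangular change of basis), and that grouping $c$ into length-$l$ blocks, $c(x)=\sum_{k=0}^{p^{e}-1}A_{k}(x)x^{lk}$ with $A_{k}=\sum_{j}\binom{j}{k}(-1)^{j-k}a_{j}$ and $w_{H}(c)=\sum_{k}w_{H}(A_{k})$.

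\emph{Lower bound.} Let $\mathbf{0}\ne c\in\mathcal{C}$ and $t=\min\{j:a_{j}\ne\mathbf{0}\}$. Factor $c(x)=(x^{l}-1)^{t}b(x)$ with $b$ a polynomial and $b\equiv a_{t}\pmod{x^{l}-1}$. Since $g=\prod_{i}m_{i}^{e_{i}}$ divides $c$ and $(x^{l}-1)^{t}$ carries each $m_{i}$ to the power exactly $t$, every $m_{i}$ with $e_{i}>t$ divides $b$, hence $\overline{g}_{t}\mid b$, hence $\overline{g}_{t}\mid a_{t}$; thus $a_{t}\in\overline{\mathcal{C}}_{t}\ne\{\mathbf{0}\}$. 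For a fixed coordinate $i$, the sequence $\bigl([A_{k}]_{i}\bigr)_{k}$ is the coefficient vector in $z$ of $q_{i}(z):=\sum_{j\ge t}[a_{j}]_{i}(z-1)^{j}=(z-1)^{t}r_{i}(z)$ with $r_{i}(1)=[a_{t}]_{i}$; so for $i\in\operatorname{supp}(a_{t})$ we have $(z-1)\nmid r_{i}$, and a one-variable inequality — $w_{H}\bigl((z-1)^{t}r(z)\bigr)\ge P_{t}$ whenever $(z-1)\nmid r$ — yields $\#\{k:[A_{k}]_{i}\ne0\}\ge P_{t}$. Summing over coordinates, $w_{H}(c)=\sum_{i}\#\{k:[A_{k}]_{i}\ne0\}\ge P_{t}\,w_{H}(a_{t})\ge P_{t}\,d_{H}(\overline{\mathcal{C}}_{t})\ge\min_{0\le\tau\le p^{e}-1}P_{\tau}\,d_{H}(\overline{\mathcal{C}}_{\tau})$.

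\emph{Upper bound.} Fix $t$ with $\overline{\mathcal{C}}_{t}\ne\{\mathbf{0}\}$ and take $\overline{c}(x)=\sum_{j\in I}\overline{c}_{j}x^{j}\in\overline{\mathcal{C}}_{t}$ of degree $<l$ with $w_{H}(\overline{c})=d_{H}(\overline{\mathcal{C}}_{t})$; since $\overline{g}_{t}\mid x^{l}-1$ and $\deg\overline{c}<l$, in fact $\overline{g}_{t}\mid\overline{c}$ as polynomials, so $\overline{c}(\beta)=0$ at every root $\beta$ of each $m_{i}$ with $e_{i}>t$. Choose a $p$-power $p^{f}\ge\max_{i}e_{i}$; as $\gcd(l,p)=1$, for each $j\in I$ there is $N_{j}$ with $N_{j}\equiv j\pmod{l}$ and $p^{f}\mid N_{j}$, and put $c(x)=(x^{l}-1)^{t}\sum_{j\in I}\overline{c}_{j}x^{N_{j}}$ (reduced modulo $x^{n}-1$). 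The $N_{j}$ are pairwise incongruent mod $l$, so before reduction $c$ has $P_{t}|I|$ distinct monomials and $w_{H}(c)=P_{t}\,w_{H}(\overline{c})$. Also $g\mid c$: at a root $\beta$ of $m_{i}$ (an $l$-th root of unity) the factor $(x^{l}-1)^{t}$ vanishes to order exactly $t$, and $h(x):=\sum_{j\in I}\overline{c}_{j}x^{N_{j}}$ satisfies $h^{[b]}(\beta)=\beta^{-b}\sum_{j\in I}\overline{c}_{j}\binom{N_{j}}{b}\beta^{j}$, which is $\overline{c}(\beta)=0$ for $b=0$ and vanishes for $1\le b<p^{f}$ because $p^{f}\mid N_{j}$ forces $\binom{N_{j}}{b}\equiv0\pmod{p}$; hence $c$ vanishes at $\beta$ to order $\ge e_{i}$ when $e_{i}>t$, and to order $\ge t\ge e_{i}$ when $e_{i}\le t$. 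Therefore $d_{H}(\mathcal{C})\le w_{H}(c)=P_{t}\,d_{H}(\overline{\mathcal{C}}_{t})$, and with the lower bound this gives $d_{H}(\mathcal{C})=\min_{t}P_{t}\,d_{H}(\overline{\mathcal{C}}_{t})$.

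\emph{Main obstacle.} The crux is the one-variable inequality $w_{H}\bigl((z-1)^{t}r(z)\bigr)\ge P_{t}$ for $(z-1)\nmid r$. I would prove it by induction on the number of nonzero base-$p$ digits of $t$: using $z^{p^{f}}-1=(z-1)^{p^{f}}$, peel off the top digit $t_{f}p^{f}$, write $r(z)(z-1)^{\,t-t_{f}p^{f}}$ in "base $z^{p^{f}}$" so that $w_{H}$ of $(z^{p^{f}}-1)^{t_{f}}$ times it splits over residue classes mod $p^{f}$; the coefficients of those classes at $z=1$ form a polynomial with a zero of order exactly $t-t_{f}p^{f}$ at $1$, so by the inductive case it has $\ge P_{t-t_{f}p^{f}}$ nonzero blocks, each contributing $\ge t_{f}+1$ again by the inductive (single-digit) case, for a total $\ge(t_{f}+1)P_{t-t_{f}p^{f}}=P_{t}$. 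The single-digit base case follows from analysing the vanishing conditions at $z=1$ (a Vandermonde argument when the exponents are distinct modulo $p$, and a reduction through $z\mapsto z^{p}$ otherwise). A secondary point requiring care is the degree bookkeeping in the upper bound: one must pick $p^{f}$, and the minimiser $t$ (up to moving within a range on which $\overline{\mathcal{C}}_{t}$ is constant), so that $\deg\bigl((x^{l}-1)^{t}\sum_{j}\overline{c}_{j}x^{N_{j}}\bigr)<n$, or else check that reduction modulo $x^{n}-1$ neither annihilates $c$ nor raises its weight.
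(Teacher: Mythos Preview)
The paper does not prove this lemma at all: it is quoted verbatim from \cite{CMSS} (Castagnoli--Massey--Schoeller--von Seemann) and used as a black box throughout Section~3, so there is no ``paper's own proof'' to compare against. Your write-up is an attempt to reprove that classical result from scratch.

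For what it is worth, your strategy is the standard one and is essentially the route of \cite{CMSS}, together with the polynomial-weight inequality of Massey--Costello--Justesen (reference \cite{MCJ} in the paper's bibliography): decompose a codeword as $(x^{l}-1)^{t}b(x)$, identify $b\bmod(x^{l}-1)$ as a nonzero word of $\overline{\mathcal{C}}_{t}$, and bound the weight column-by-column via $w_{H}\!\bigl((z-1)^{t}r(z)\bigr)\ge P_{t}$ when $(z-1)\nmid r$; for the upper bound, lift a minimum-weight word of $\overline{\mathcal{C}}_{t}$ and multiply by $(x^{l}-1)^{t}$. The two places you flag as delicate are real. First, the ``single-digit base case'' of your induction is exactly the statement that a polynomial over $\mathbb{F}_{p}$ with a root of multiplicity $t<p$ at $1$ has at least $t+1$ nonzero terms; your Vandermonde remark handles exponents that are distinct modulo $p$, but you should say why the general case reduces to that (one standard device is to group terms by residue of the exponent mod $p$ and use the Frobenius $z\mapsto z^{p}$, as you hint). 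Second, in the upper bound the degree/reduction issue is genuine: the clean fix is not to juggle $p^{f}$ and degrees but to observe that one may always take the minimising $t$ to satisfy $t<p^{e}$ and $d_{H}(\overline{\mathcal{C}}_{t})\le l$, so the constructed polynomial $(x^{l}-1)^{t}\,\overline{c}(x)$ already has degree $<lp^{e}=n$ with no reduction needed; your exponent-lifting trick with $N_{j}$ is then unnecessary for the bound (though it is the right idea for exhibiting \emph{all} minimum-weight words, which is closer to what \cite{CMSS} actually does).
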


In this paper, we will employ repeated-root cyclic codes to construct new AMDS symbol-pair codes.
The following lemmas will be frequently applied in our later proof.

%In the sequel, we recall some results of \cite{ML1} and , which can be directly determined by \cite{CMSS} and \cite{MCJ}.
%The next result can be deduced directly from the results in \cite{CMSS} and we give a brief proof.

\begin{lem}{\rm(\,\cite{ML1}\,)}\label{lemdH}
Let $\mathcal{C}$ be a repeated-root cyclic code of length $lp^e$ over $\mathbb{F}_{q}$ and $c(x)=(x^l-1)^{t}\,v(x)$ an arbitrary nonzero codeword in $\mathcal{C}$, where $l$ and $e$ are positive integers with ${\rm gcd}\left(l,\,p\right)=1$, $0\leq t\leq p-1$ and $\left(x^l-1\right)\nmid v(x)$.
If $w_{H}\left(c(x)\right)=d_{H}(\mathcal{C})$, then
\begin{equation*}
  w_{H}\left(c(x)\right)=P_{t}\cdot N_{v}.
\end{equation*}
Here $P_t$ is defined as (\ref{eqpt}) in Lemma \ref{lemdistance} and $N_{v}=w_{H}\left(v(x)\,{\rm mod}\,(x^l-1)\right)$.
\end{lem}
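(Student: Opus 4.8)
The statement to be proved is Lemma~\ref{lemdH}: if $c(x)=(x^l-1)^t v(x)$ is a \emph{minimum-weight} codeword of the repeated-root cyclic code $\mathcal{C}$ of length $lp^e$, with $0\le t\le p-1$ and $(x^l-1)\nmid v(x)$, then $w_H(c(x))=P_t\cdot N_v$, where $N_v=w_H\!\left(v(x)\bmod(x^l-1)\right)$. The plan is to combine a general upper bound on $w_H(c(x))$ coming from the factorization of $(x^l-1)^t$ with the matching lower bound that is forced precisely because $c(x)$ attains $d_H(\mathcal{C})$, which by Lemma~\ref{lemdistance} equals $\min_t P_t\, d_H(\overline{\mathcal{C}}_t)$.

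First I would establish the upper bound $w_H(c(x))\le P_t\cdot N_v$. Since $0\le t\le p-1$, the single-digit $p$-adic expansion gives $P_t=t+1$ and, working in $\mathbb{F}_q[x]$, $(x^l-1)^t=(x^{l}-1)^t$ has exactly $t+1$ nonzero terms as a polynomial in $x^l$, namely $\sum_{j=0}^t\binom{t}{j}(-1)^{t-j}x^{lj}$ with all binomial coefficients nonzero mod $p$ because $t<p$. Write $v(x)$ modulo $x^{lp^e}-1$; the key observation is that multiplication by $(x^l-1)^t$ acts ``blockwise'': reducing $v(x)$ modulo $x^l-1$ to a polynomial $\bar v(x)$ of degree $<l$ and support size $N_v$, the product $(x^l-1)^t\,\bar v(x)$ has support contained in $t+1$ translates (by $l,2l,\dots,tl$) of the support of $\bar v$, giving at most $(t+1)N_v=P_t N_v$ nonzero coordinates once reduced modulo $x^{lp^e}-1$. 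One must check that passing from $v(x)$ to $\bar v(x)$ does not change the product's weight — this is because $c(x)=(x^l-1)^t v(x)$ and $(x^l-1)^t(v(x)-\bar v(x))$ differ by something divisible by $(x^l-1)^{t+1}$, and the construction of $\overline{\mathcal C}_t$ via $\overline g_t(x)=\prod_{e_i>t}m_i(x)$ together with Lemma~\ref{lemdistance} is exactly tailored to control such terms; alternatively one argues directly that $c(x)\bmod(x^{lp^e}-1)$ coincides with $(x^l-1)^t\bar v(x)$ up to the cyclic wrap-around, which does not increase weight.

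Next I would establish the matching lower bound using minimality. By Lemma~\ref{lemdistance}, $d_H(\mathcal{C})=\min_{0\le s\le p^e-1}P_s\, d_H(\overline{\mathcal C}_s)$, and in particular $d_H(\mathcal{C})\le P_t\, d_H(\overline{\mathcal C}_t)$. On the other hand, $\bar v(x)\bmod(x^l-1)$ is a nonzero codeword of $\overline{\mathcal C}_t$ — this is where the hypothesis $(x^l-1)\nmid v(x)$ and the definition of $\overline g_t$ enter, since divisibility of $c(x)$ by $g(x)=\prod m_i(x)^{e_i}$ forces $\bar v(x)$ to be divisible by $\prod_{e_i>t}m_i(x)=\overline g_t(x)$ — so $N_v=w_H(\bar v)\ge d_H(\overline{\mathcal C}_t)$. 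Combining, $w_H(c(x))=d_H(\mathcal{C})\le P_t\, d_H(\overline{\mathcal C}_t)\le P_t N_v$, and together with the upper bound already shown, all inequalities are equalities; in particular $w_H(c(x))=P_t N_v$.

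I expect the main obstacle to be the blockwise/support argument in the upper bound: one must carefully track how the cyclic reduction modulo $x^{lp^e}-1$ interacts with the $t+1$ shifted copies of $\operatorname{supp}(\bar v)$, and verify that these copies do not overlap in a way that could only \emph{decrease} weight below $P_t N_v$ while still keeping $c(x)$ of minimal weight — the point being that for a minimum-weight codeword no such cancellation can help, but one needs the two-sided squeeze (upper bound $\le P_t N_v$, lower bound $\ge d_H(\mathcal C)=P_t N_v$) rather than a direct combinatorial identity. The cleanest route is to avoid proving the exact product weight in general and instead only prove the inequality $w_H(c(x))\le P_t N_v$, letting Lemma~\ref{lemdistance} supply equality; this sidesteps the delicate overlap analysis entirely.
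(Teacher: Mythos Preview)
The paper does not prove this lemma; it is quoted from \cite{ML1} with no argument, so there is no in-paper proof to compare against. Independently of that, your proposal has a real gap.

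Both of your steps establish the \emph{same} inequality $w_H(c(x))\le P_t N_v$. Your second step is correct and is the right way to obtain that direction: $\bar v$ is a nonzero element of $\overline{\mathcal C}_t$, so $N_v\ge d_H(\overline{\mathcal C}_t)$, and Lemma~\ref{lemdistance} gives
\[
w_H(c)=d_H(\mathcal C)=\min_s P_s\,d_H(\overline{\mathcal C}_s)\le P_t\,d_H(\overline{\mathcal C}_t)\le P_t N_v.
\]
Despite the label ``matching lower bound,'' this chain ends with $\le$, not $\ge$. Your first step is both redundant (same direction) and flawed: what you actually bound is the weight of $(x^l-1)^t\bar v(x)$, which is not $c(x)$. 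The claim that ``passing from $v$ to $\bar v$ does not change the product's weight'' is precisely the unproved point; the difference $(x^l-1)^t(v-\bar v)=(x^l-1)^{t+1}q$ need not lie in $\mathcal C$, and for non-minimal $c$ the two weights genuinely differ. Your ``cleanest route'' in the last paragraph repeats the error: Lemma~\ref{lemdistance} only supplies another $\le$, never the reverse inequality.

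What is missing is $w_H(c(x))\ge P_t N_v$, valid for \emph{every} codeword $c=(x^l-1)^t v$ with $(x^l-1)\nmid v$ and $0\le t\le p-1$, not just minimal ones. This is the lower-bound half of the argument in \cite{CMSS}: write $c(x)=\sum_{j=0}^{l-1}x^jf_j(x^l)$ and $v(x)=\sum_j x^jg_j(x^l)$, so that $f_j(y)=(y-1)^t g_j(y)$ in $\mathbb F_q[y]/\bigl((y-1)^{p^e}\bigr)$. Then $N_v=\bigl|\{j:g_j(1)\ne 0\}\bigr|$, and for each such $j$ one has $(y-1)^t\,\|\,f_j$, whence $w_H(f_j)\ge P_t$ by a Lucas/Vandermonde argument using $t\le p-1$. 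Summing over $j$ gives $w_H(c)=\sum_j w_H(f_j)\ge P_t N_v$, and then your (correct) second step closes the squeeze.
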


\begin{lem}{\rm(\,\cite{ML1}\,)}\label{lemML}
Let $\mathcal{C}$ be a repeated-root cyclic code over $\mathbb{F}_{q}$ with length $3p$ and $g(x)=\left(x-1\right)^{5}\left(x-\omega\right)^3\left(x-\omega^2\right)^2$ the generator polynomial of $\mathcal{C}$, where $3\,|\,(p-1)$ and $\omega$ is a primitive third root of unity in $\mathbb{F}_{p}$. Then
$\mathcal{C}$ is an MDS $(3p,\,12)$ symbol-pair code.
\end{lem}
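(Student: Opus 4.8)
First I record the elementary part. Since $\deg g(x)=10$ and, using $3\mid p-1$, $x^{3p}-1=(x^{3}-1)^{p}=(x-1)^{p}(x-\omega)^{p}(x-\omega^{2})^{p}$ over $\mathbb{F}_{p}$, the polynomial $g(x)$ divides $x^{3p}-1$ and $\dim\mathcal{C}=3p-10$. By Lemma \ref{Singleton}, $d_{p}(\mathcal{C})\le 3p-(3p-10)+2=12$, so the whole task is to show $w_{p}(c)\ge 12$ for every nonzero $c\in\mathcal{C}$. I would use two tools repeatedly. (i) If $\mathbf{0}\neq c$ has support $S\subsetneq\mathbb{Z}_{3p}$, then $w_{p}(c)=w_{H}(c)+\rho(c)$, where $\rho(c)$ is the number of cyclic runs of $S$; in particular $w_{p}(c)\ge w_{H}(c)+1$, so codewords with $w_{H}(c)\ge 11$ need no further thought. (ii) Decompose $c$ by powers of $x$ modulo $3$: $c(x)=c_{0}(x^{3})+x\,c_{1}(x^{3})+x^{2}c_{2}(x^{3})$ with $\deg_{y}c_{r}<p$. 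The supports of the three summands lie in the three residue classes mod $3$, so $w_{H}(c)=\sum_{r}w_{H}(c_{r})$, and if $c=(x^{3}-1)^{t}v$ with $(x^{3}-1)\nmid v$ then $c_{r}(y)=(y-1)^{t}v_{r}(y)$; hence each nonzero $c_{r}$ lies in the length-$p$ repeated-root cyclic code $\langle(y-1)^{t}\rangle$, whose minimum Hamming weight equals $P_{t}=t+1$ by Lemma \ref{lemdistance}.

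Next I split on the $(x^{3}-1)$-adic order $t$ of $c$; since the orders of $c$ at $1,\omega,\omega^{2}$ are at least $5,3,2$, we have $t\ge 2$. (Incidentally, running Lemma \ref{lemdistance} with $l=3$, $e=1$ gives $d_{H}(\mathcal{C})=\min_{t}P_{t}\,d_{H}(\overline{\mathcal{C}}_{t})=6$, attained only at $t=5$, which is consistent with a target of $12$.) If $t\ge 5$: every nonzero $c_{r}$ has weight $\ge P_{t}\ge 6$, so either two of them are nonzero and $w_{H}(c)\ge 12$, or exactly one is nonzero, in which case the support lies in a single residue class mod $3$ and consists of isolated points (no two positions $\equiv r$ are cyclically adjacent, as $3p-1\not\equiv 0\pmod 3$), giving $w_{p}(c)=2w_{H}(c)\ge 12$. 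If $t\in\{3,4\}$: the membership conditions force $(x-1)\mid v$, i.e. $\sum_{r}v_{r}(1)=0$, while $(x^{3}-1)\nmid v$ forbids all $v_{r}(1)$ from vanishing; hence at least two $c_{r}$ are nonzero, and exactly two once $w_{H}(c)\le 10$ (a third nonzero component would give $w_{H}(c)\ge 3P_{t}\ge 12$). Then the support lies in two residue classes mod $3$, so no run has length $\ge 3$, whence $\rho(c)\ge\lceil w_{H}(c)/2\rceil$ and, since $w_{H}(c)\ge 2P_{t}\ge 8$ here, $w_{p}(c)=w_{H}(c)+\rho(c)\ge w_{H}(c)+\lceil w_{H}(c)/2\rceil\ge 12$.

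The case $t=2$ is the crux. Now the membership conditions read $(x-1)^{3}\mid v$, $(x-\omega)\mid v$, $(x-\omega^{2})\nmid v$, i.e. $v(1)=v(\omega)=0\neq v(\omega^{2})$; the discrete Fourier relation between $(v_{0}(1),v_{1}(1),v_{2}(1))$ and $(v(1),v(\omega),v(\omega^{2}))$ then forces \emph{all three} $v_{r}(1)\neq 0$, so all three $c_{r}$ are nonzero with $w_{H}(c_{r})\ge P_{2}=3$ and $w_{H}(c)\ge 9$. Only $w_{H}(c)=9$ is delicate: there one needs $\rho(c)\ge 3$. I would prove that a codeword of Hamming weight $9$ and $(x^{3}-1)$-adic order $2$ must, up to a cyclic shift and a nonzero scalar, be $(x^{p}-1)(x^{p}-\omega)(x^{3}-1)^{2}=(x^{2p}+\omega^{2}x^{p}+\omega)(x^{3}-1)^{2}$: writing $c=g(x)v'(x)$ with $(x-\omega^{2})\nmid v'$, the triple zero of $c$ at $x=1$ combined with a Vandermonde/derivative computation rules out "short" exponent patterns and forces the three exponent blocks of $c$ to be mutually congruent modulo $p$, which pins $c$ down. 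The support of that codeword is nine points lying in three blocks spaced $p$ apart, hence has at least six cyclic runs, so $w_{p}(c)\ge 15$. Collecting the cases, $d_{p}(\mathcal{C})\ge 12$; with the Singleton bound this gives $d_{p}(\mathcal{C})=12$, and since $\dim\mathcal{C}=3p-10$, $\mathcal{C}$ is an MDS $(3p,12)$ symbol-pair code.

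The main obstacle is this last step: converting the divisibility data "$(x-1)^{5}$, $(x-\omega)^{3}$, $(x-\omega^{2})^{2}$" into a complete description of the minimum-nonzero-weight codewords of order $t=2$. This needs the precise structure of the weight-$3$ codewords of the auxiliary length-$p$ code $\langle(y-1)^{2}\rangle$ (supplied by Lemma \ref{lemdH}) together with the rigidity forced by a triple root at $1$ and a root at $\omega$; it also uses the subsidiary fact that $g(x)$ has no two consecutive zero coefficients, which is what makes $w_{p}(g)$ equal to $12$ rather than something smaller in the sub-case where $v'$ is constant.
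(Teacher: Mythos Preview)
The paper does not prove this lemma; it is quoted from \cite{ML1} and used only as a black box in the proofs of Theorems~\ref{thmAMDS3p12p} and~\ref{thmAMDS3p13p}. So there is no in-paper argument to compare your proposal against, and I can only assess the proposal on its own merits.

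Your overall strategy---split on the $(x^{3}-1)$-adic order $t$, decompose $c(x)=\sum_{r}x^{r}c_{r}(x^{3})$, and bound $w_{p}(c)=w_{H}(c)+\rho(c)$---is the standard one and is exactly how the proof in \cite{ML1} is organised. The cases $t\ge 5$ and $t\in\{3,4\}$ are handled correctly: the counting of nonzero components via $v_{r}(1)$ and the run-length bound $\rho(c)\ge\lceil w_{H}(c)/2\rceil$ when the support misses a residue class mod $3$ are both valid.

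The genuine gap is in the case $t=2$. First, a minor omission: you dismiss $w_{H}(c)=10$ without argument, but a single cyclic run of length $10$ would give $w_{p}(c)=11$. This is easily repaired (after a cyclic shift such a codeword would be a nonzero polynomial of degree $\le 9$, contradicting $\deg g=10$), but it should be said. Second, and substantively, your treatment of $w_{H}(c)=9$ is not a proof. You assert that any such codeword is, up to shift and scalar, the single polynomial $(x^{p}-1)(x^{p}-\omega)(x^{3}-1)^{2}$, but the justification offered---``a Vandermonde/derivative computation rules out `short' exponent patterns and forces the three exponent blocks to be mutually congruent modulo $p$''---is a description of a hoped-for argument, not the argument itself. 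The uniqueness claim is both stronger than what is needed (you only need $\rho(c)\ge 3$) and unsubstantiated; there are many weight-$3$ codewords in each component code $\langle (y-1)^{2}\rangle$, and the four extra linear constraints coming from $(x-1)^{3}(x-\omega)\mid v$ do not obviously cut the solution set down to a single orbit. The proof in \cite{ML1} disposes of this case by an explicit enumeration of the possible support configurations with $\rho(c)\le 2$ and a system-by-system elimination using $c^{(k)}(1)=c^{(k)}(\omega)=0$; that case analysis is precisely the content you have replaced by a one-sentence sketch, and without it the argument is incomplete.
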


\begin{lem}{\rm(\,\cite{CLL}\,)}\label{lemMDS}
Let $\mathcal{C}$ be an $[n,\,k,\,d_{H}]$ constacyclic code over $\mathbb{F}_{q}$ with $2\leq d_{H}<n$.
Then we have $d_{p}(\mathcal{C})\geq d_{H}+2$ if and only if $\mathcal{C}$ is not an MDS code, i.e., $k<n-d_{H}+1$.
%Equivalently, $d_{p}(\mathcal{C})=d_{H}+1$ if and only if $\mathcal{C}$ is an MDS code, i.e., $k=n-d_{H}+1$.
\end{lem}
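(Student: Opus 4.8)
The plan is to reduce the statement to a combinatorial dichotomy about the supports of minimum-weight codewords and then resolve that dichotomy using the polynomial description of a constacyclic code.

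First I would record how the symbol-pair weight of a single nonzero vector $\mathbf{c}\in\mathbb{F}_q^n$ relates to its Hamming weight. Writing $\mathrm{supp}(\mathbf{c})$ for its set of nonzero coordinates, the definition of $\pi$ shows at once that the nonzero coordinates of $\pi(\mathbf{c})$ are exactly $\mathrm{supp}(\mathbf{c})\cup(\mathrm{supp}(\mathbf{c})-1)$, indices read modulo $n$. When $w_H(\mathbf{c})<n$ the support has a gap, so it splits into $\ell\geq 1$ maximal cyclic runs, and adjoining to each run the coordinate immediately to its left contributes precisely one new index per run; hence $w_p(\mathbf{c})=w_H(\mathbf{c})+\ell$ with $\ell\geq 1$, which already recovers $d_p(\mathcal{C})\geq d_H+1$. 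I would then observe that $d_p(\mathcal{C})\geq d_H+2$ holds if and only if $\mathcal{C}$ has no minimum-Hamming-weight codeword whose support is a single cyclic run of length $d_H$: a codeword with $w_p=d_H+1$ and $w_H<n$ must be of this type by the above identity, while the only remaining possibility $w_H(\mathbf{c})=n$ can occur only if $d_H=n-1$, in which case any minimum-weight codeword (its support missing one coordinate) is already a single run. So it suffices to show that $\mathcal{C}$ has a minimum-weight codeword with single-run support precisely when $\mathcal{C}$ is MDS.

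If $\mathcal{C}$ is MDS, I would exhibit the generator polynomial $g(x)$ itself as such a codeword: its degree equals $n-k=d_H-1$, so as a vector $g(x)$ is supported inside $\{0,1,\dots,d_H-1\}$ and has weight at most $d_H$, while minimality of $d_H(\mathcal{C})$ forces the weight to be exactly $d_H$; hence all of $g_0,\dots,g_{d_H-1}$ are nonzero, the support is the single run $\{0,1,\dots,d_H-1\}$, and $w_p(g(x))=d_H+1$, so $d_p(\mathcal{C})=d_H+1$. Conversely, if $\mathcal{C}$ is not MDS I would show no such codeword exists, arguing by contradiction: suppose some codeword has Hamming weight $d_H$ and cyclically consecutive support. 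Since the $\eta$-constacyclic shift $\tau_\eta$ changes neither $w_H$ nor $w_p$ (it only permutes the zero/nonzero pattern cyclically and rescales one entry), after a suitable power of $\tau_\eta$ we may assume this codeword is supported exactly on $\{0,1,\dots,d_H-1\}$, hence is a nonzero polynomial of degree $d_H-1$. But every codeword of $\mathcal{C}=\langle g(x)\rangle$, represented by its polynomial of degree less than $n$, is a multiple of $g(x)$, and in the non-MDS case $\deg g(x)=n-k\geq d_H>d_H-1$; a nonzero polynomial of degree $d_H-1$ cannot be a multiple of $g(x)$, a contradiction.

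I expect this last step to be the only substantive one. The reformulation is a routine unwinding of the definition of $\pi$, and the MDS case merely records that the generator polynomial is the required extremal codeword. In the non-MDS case the real content is that the constacyclic symmetry is exactly what lets a hypothetical single-run minimum-weight codeword be rotated into the window $\{0,1,\dots,d_H-1\}$, where the inequality $\deg g(x)\geq d_H$ finally blocks it; the points that need care are that this window is genuinely reachable (which uses $d_H<n$, so that no wraparound intervenes) and that the degree count $\deg g(x)=n-k$ is exact.
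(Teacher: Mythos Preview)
The paper does not supply its own proof of this lemma: it is quoted from \cite{CLL} and used as a black box. So there is no in-paper argument to compare your proposal against.

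That said, your proposal is a correct proof and is essentially the standard one. The reduction $w_p(\mathbf{c})=w_H(\mathbf{c})+\ell$ (for $w_H(\mathbf{c})<n$, with $\ell$ the number of cyclic runs) is exactly the right reformulation, and your handling of the boundary case $w_H(\mathbf{c})=n$ is fine: it forces $d_H=n-1$, and then every minimum-weight codeword already has single-run support, so the dichotomy is not violated. For the MDS direction, exhibiting $g(x)$ as a minimum-weight codeword with consecutive support is the expected move. For the non-MDS direction, your argument is clean; the one point worth making explicit is why a nonzero codeword, viewed as a polynomial of degree $<n$, is genuinely divisible by $g(x)$ in $\mathbb{F}_q[x]$ (and not merely modulo $x^n-\eta$): this follows because $g(x)\mid(x^n-\eta)$, so from $c(x)=f(x)g(x)+h(x)(x^n-\eta)$ one gets $g(x)\mid c(x)$. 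With that line added, the degree obstruction $\deg g(x)=n-k\geq d_H>d_H-1$ closes the argument.
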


%Before presenting our result, we recall a useful lower bound on minimum symbol-pair distance for constacyclic codes.
\begin{lem}{\rm(\,\cite{CLL}\,)}\label{lemdh3}
Let $\mathcal{C}$ be an $\left[lp^e,\,k,\,d_{H}\right]$ repeated-root cyclic code over $\mathbb{F}_{q}$ and $g(x)$ the generator polynomial of $\mathcal{C}$, where ${\rm gcd}\left(l,p\right)=1$ and $l,\,e>1$.
If $d_{H}(\mathcal{C})$ is prime and one of the following two conditions is satisfied

(1) $l<d_{H}(\mathcal{C})<lp^e-k$;

(2) $x^l-1$ is a divisor of $g(x)$ and $2<d_{H}(\mathcal{C})<lp^e-k$,
\\[2mm]
then $d_{p}\left(\mathcal{C}\right)\geq d_{H}(\mathcal{C})+3$.
\end{lem}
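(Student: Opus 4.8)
The plan is to prove $d_{p}(\mathcal{C})\ge d_{H}+2$ first and then exclude equality; throughout write $d_{H}=d_{H}(\mathcal{C})$. Since $d_{H}<lp^{e}-k$ gives $k<lp^{e}-d_{H}<lp^{e}-d_{H}+1$, the code $\mathcal{C}$ is not MDS, and as $d_{H}\ge 3$ (under (1), $d_{H}>l\ge 2$; under (2) it is assumed) Lemma \ref{lemMDS} gives $d_{p}(\mathcal{C})\ge d_{H}+2$. Assume toward a contradiction that $d_{p}(\mathcal{C})=d_{H}+2$, and fix a nonzero $c\in\mathcal{C}$ with $w_{p}(c)=d_{H}+2$. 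If $S=\mathrm{supp}(c)\subsetneq\mathbb{Z}_{lp^{e}}$, then $w_{p}(c)=|S\cup(S-1)|=w_{H}(c)+b(c)$, where $b(c)\ge 1$ is the number of maximal runs of consecutive coordinates in $S$ read cyclically; since $w_{H}(c)\ge d_{H}$, the only possibilities are $(w_{H}(c),b(c))=(d_{H}+1,1)$ or $(d_{H},2)$. (If $w_{H}(c)=lp^{e}$ then $lp^{e}=d_{H}+2$, forcing $k=1$, a case one checks directly.)

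Consider first $(w_{H}(c),b(c))=(d_{H}+1,1)$, so $S$ is a single run of $d_{H}+1$ consecutive coordinates. Multiplying $c$ by a suitable power of $x$ --- which keeps it in the cyclic code $\mathcal{C}$ --- we may assume $S=\{0,1,\dots,d_{H}\}$, so that $c(x)$ is a polynomial of degree exactly $d_{H}$. But $c(x)$ is a nonzero multiple of $g(x)$, hence $d_{H}=\deg c(x)\ge\deg g(x)=lp^{e}-k$, contradicting $d_{H}<lp^{e}-k$; so this case cannot occur (using only the hypothesis common to (1) and (2)).

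Now treat $(w_{H}(c),b(c))=(d_{H},2)$, where $c$ is a minimum Hamming weight codeword whose support splits into two cyclic runs. Write $c(x)=(x^{l}-1)^{t}v(x)$ with $(x^{l}-1)\nmid v(x)$. Assuming $t\le p-1$ (a point addressed below), Lemma \ref{lemdH} gives $d_{H}=w_{H}(c)=P_{t}\cdot N_{v}$, so that, since $d_{H}$ is prime, either $P_{t}=1$ and $N_{v}=d_{H}$, or $P_{t}=d_{H}$ and $N_{v}=1$. In the first case $t=0$, hence $v=c$ and $N_{v}=w_{H}\!\left(c(x)\bmod(x^{l}-1)\right)=d_{H}$; but reduction modulo $x^{l}-1$ sends $x^{i}\mapsto x^{i\bmod l}$, so the weight of the reduction is at most the number of residue classes mod $l$ met by $S$, which is at most $l$, giving $d_{H}\le l$. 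This contradicts $l<d_{H}$ under (1); and under (2) it is impossible anyway, since $x^{l}-1\mid g(x)\mid c(x)$ would force $t\ge 1$. In the second case $N_{v}=1$, i.e.\ $v(x)\equiv\alpha x^{j}\pmod{x^{l}-1}$ for some $\alpha\in\mathbb{F}_{q}^{*}$, and $P_{t}=d_{H}$; the goal is to show that the support of $c$ must then agree, up to a cyclic shift, with the step-$l$ arithmetic progression supporting $x^{j}(x^{l}-1)^{t}$. Its $d_{H}$ nonzero coordinates would then be pairwise non-adjacent in $\mathbb{Z}_{lp^{e}}$ (here $l\ge 2$ is used), so $b(c)=d_{H}$ and $w_{p}(c)=2d_{H}\ge d_{H}+3$, contradicting $w_{p}(c)=d_{H}+2$. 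This exhausts the cases and yields $d_{p}(\mathcal{C})\ge d_{H}+3$.

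The crux --- and the step I expect to be hardest --- is this last subcase $N_{v}=1$: one must rule out that the two runs of $\mathrm{supp}(c)$ are produced by cancellation inside the product $(x^{l}-1)^{t}v(x)$, i.e.\ show that a minimum-weight codeword with $N_{v}=1$ is exactly a scalar multiple of a monomial times $(x^{l}-1)^{t}$. This needs the finer structure of low-weight codewords of repeated-root cyclic codes behind Lemmas \ref{lemdistance} and \ref{lemdH}, including the verification that the multiplicity $t$ of $x^{l}-1$ in such a codeword satisfies $t\le p-1$ (so Lemma \ref{lemdH} applies and $P_{t}=t+1$), and control of $P_{t}=\prod_{i}(t_{i}+1)$ over the $p$-adic digits $t_{i}$ when $t\ge p$. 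By contrast, the two remaining cases reduce to a short degree count and a residue count.
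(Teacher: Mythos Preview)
The paper does not give its own proof of this lemma: it is quoted from \cite{CLL} and used as a black box in Theorems~\ref{thmAMDS4p6p} and~\ref{thmAMDS3p8p}, so there is no in-paper argument to compare against and I can only assess your proposal on its own terms. The overall architecture is the natural one: push to $d_p\ge d_H+2$ via Lemma~\ref{lemMDS}, then eliminate any $c$ with $w_p(c)=d_H+2$ through the run decomposition $w_p(c)=w_H(c)+b(c)$. Your single-run case $(d_H{+}1,1)$ is correct --- the degree bound $\deg c\ge\deg g=lp^e-k>d_H$ does it --- and in the two-run case the primality factorisation $d_H=P_t N_v$ together with the residue count $N_v\le l$ cleanly disposes of the $t=0$ branch under either hypothesis~(1) or~(2).

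The gap you yourself flag in the $N_v=1$ branch is genuine and is exactly where the work lies. Lemma~\ref{lemdH} gives only the numerical identity $w_H(c)=P_tN_v$; it does not force the support of a minimum-weight codeword with $N_v=1$ to coincide with the step-$l$ progression carried by $x^j(x^l-1)^t$. Writing $v(x)=\alpha x^j+(x^l-1)u(x)$ one has $c(x)=\alpha x^j(x^l-1)^t+(x^l-1)^{t+1}u(x)$, and nothing you have cited rules out $u\ne 0$ with cancellation that keeps $w_H(c)=P_t$ while merging the support into two runs. Closing this requires the structural description of minimum-weight words in repeated-root cyclic codes from \cite{CMSS} (beyond the bare weight formula), which is what underpins the argument in \cite{CLL}. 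Your second acknowledged issue --- that Lemma~\ref{lemdH} is stated here only for $0\le t\le p-1$ whereas a priori $t$ may reach $p^e-1$ --- is also real; note, however, that the paper itself invokes this lemma only with $e=1$ (the hypothesis ``$l,\,e>1$'' in the statement seems to be a misprint for $l>1$, $e\ge 1$), in which regime this difficulty disappears.
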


\section{Constructions of AMDS Symbol-Pair Codes}

By utilizing repeated-root cyclic codes, we construct six new classes of AMDS symbol-pair codes with $d_p\in\{4,\,6,\,7,\,8,\,12,\,13\}$ in this section.
Among these codes, one class has unbounded lengths.

Now we begin with a class of AMDS symbol-pair codes with unbounded lengths.

\begin{thm}\label{thmAMDSlps4q}
Let $q$ be a power of prime $p$, $s\geq1$, $l$ a positive integer with $l\,|\left(q^2-1\right)$ and $l\geq \left(q+1\right)$.
Then there exists an AMDS $\left(lp^s,\,4\right)_{q}$ symbol-pair code.
\end{thm}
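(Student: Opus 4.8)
The plan is to exhibit a repeated-root cyclic code $\mathcal{C}=\langle g(x)\rangle$ of length $n=lp^{s}$ over $\mathbb{F}_{q}$ whose generator polynomial has degree $3$ (so that $\dim\mathcal{C}=n-3$) and with $d_{H}(\mathcal{C})=2$. Once this is done the conclusion is immediate: since $\dim\mathcal{C}=n-3<n-1=n-d_{H}(\mathcal{C})+1$, the code $\mathcal{C}$ is not an MDS code, so Lemma~\ref{lemMDS} gives $d_{p}(\mathcal{C})\geq d_{H}(\mathcal{C})+2=4$; on the other hand~(\ref{eqdhdp}) gives $d_{p}(\mathcal{C})\leq 2\,d_{H}(\mathcal{C})=4$. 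Hence $d_{p}(\mathcal{C})=4=n-\dim\mathcal{C}+1$, and $\mathcal{C}$ is an AMDS $(lp^{s},4)_{q}$ symbol-pair code.

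To build $g(x)$, first note that $l\mid q^{2}-1$ forces $\gcd(l,p)=1$, so $x^{l}-1$ is squarefree over $\mathbb{F}_{q}$, and every root of $x^{l}-1$ lies in $\mathbb{F}_{q^{2}}$, whence each monic irreducible factor of $x^{l}-1$ over $\mathbb{F}_{q}$ has degree $1$ or $2$. The number of its linear factors is $\gcd(l,q-1)\leq q-1<q+1\leq l$, so $x^{l}-1$ has at least one quadratic irreducible factor $m_{2}(x)$. Put $g(x)=(x-1)\,m_{2}(x)$: this is squarefree of degree $3$ and divides $x^{l}-1$, hence divides $x^{lp^{s}}-1=(x^{l}-1)^{p^{s}}$, so $\mathcal{C}=\langle g(x)\rangle$ is a repeated-root cyclic code of length $lp^{s}$ with $\dim\mathcal{C}=lp^{s}-3$.

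For $d_{H}(\mathcal{C})$, apply Lemma~\ref{lemdistance} with $e=s$. As $g(x)$ is squarefree, every irreducible factor has multiplicity $1$, so $\overline{g}_{t}(x)=1$ and $\overline{\mathcal{C}}_{t}=\mathbb{F}_{q}^{l}$ for all $t$ with $1\leq t\leq p^{s}-1$; thus $d_{H}(\overline{\mathcal{C}}_{t})=1$ for these $t$, and in particular $t=1$ contributes $P_{1}\,d_{H}(\overline{\mathcal{C}}_{1})=2$ to the minimum in~(\ref{eqdistance}). For $t=0$ one has $\overline{\mathcal{C}}_{0}=\langle g(x)\rangle$ viewed as a simple-root cyclic code of length $l$, so $d_{H}(\overline{\mathcal{C}}_{0})\geq 2$ (equal to $\infty$ in the degenerate case $l=3$, $g(x)=x^{3}-1$), while $P_{t}=w_{H}\bigl((x-1)^{t}\bigr)\geq 2$ for every $t\geq 1$. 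Consequently all terms in~(\ref{eqdistance}) are at least $2$ and the $t=1$ term equals $2$, so $d_{H}(\mathcal{C})=2$, which closes the argument.

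I expect the only genuinely delicate point to be the first step, namely securing a degree-$3$ squarefree divisor of $x^{l}-1$: the hypothesis $l\mid q^{2}-1$ rules out any irreducible factor of degree $\geq 3$, and $l\geq q+1$ rules out $x^{l}-1$ splitting into linear factors alone, so one is forced to pair a linear factor with a quadratic one, and this is exactly where both hypotheses are used. The rest is a routine application of Lemmas~\ref{lemdistance} and~\ref{lemMDS} together with the bounds in~(\ref{eqdhdp}).
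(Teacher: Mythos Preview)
Your proof is correct and follows essentially the same approach as the paper: both construct a degree-$3$ generator $g(x)=(x-1)\cdot(\text{irreducible quadratic factor of }x^{l}-1)$, deduce $d_{H}(\mathcal{C})=2$ via Lemma~\ref{lemdistance}, and then squeeze $d_{p}(\mathcal{C})=4$ between Lemma~\ref{lemMDS} and~(\ref{eqdhdp}). The only difference is cosmetic: the paper fixes the quadratic factor to be $(x-\delta)(x-\delta^{q})$ with $\delta$ a primitive $l$-th root of unity (implicitly using $l\geq q+1$ to ensure $\delta\notin\mathbb{F}_{q}$), whereas you argue by a counting $\gcd(l,q-1)\leq q-1<l$ that \emph{some} quadratic factor exists and pick any one; your write-up also spells out the $\overline{\mathcal{C}}_{t}$ computation that the paper leaves as a one-line citation of Lemma~\ref{lemdistance}.
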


\begin{proof}
Let $\mathcal{C}$ be a repeated-root cyclic code of length $lp^s$ over $\mathbb{F}_q$ with generator polynomial
\begin{equation*}
  g(x)=\left(x-1\right)\left(x-\delta\right)\left(x-\delta^{q}\right)
\end{equation*}
where $\delta$ is a primitive $l$-th root of unity in $\mathbb{F}_{q^2}$.
By Lemma \ref{lemdistance}, one can derive that the parameter of $\mathcal{C}$ is $\left[lp^s,\,lp^s-3,\,2\right]$.
Since $\mathcal{C}$ is not MDS, Lemma \ref{lemMDS} yields that $d_{p}\left(\mathcal{C}\right)\geq4$.
Note that by inequality (\ref{eqdhdp}), one can obtain
\begin{equation*}
  d_{H}(\mathcal{C})+1\leq d_{p}\left(\mathcal{C}\right)\leq 2\cdot d_{H}(\mathcal{C})=4.
\end{equation*}
It follows that $d_{p}(\mathcal{C})=4$, which induces that $\mathcal{C}$ is an AMDS $\left(lp^s,\,4\right)_{q}$ symbol-pair code.
This concludes the desired result.
\end{proof}

In what follows, according to Lemma \ref{lemdh3}, we derive a class of AMDS symbol-pair codes with $n=4p$ and $d_{p}=6$.

\begin{thm}\label{thmAMDS4p6p}
Let $p$ be an odd prime.
Then there exists an AMDS $\left(4p,\,6\right)_{p}$ symbol-pair code.
\end{thm}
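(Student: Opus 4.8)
The plan is to exhibit an explicit repeated-root cyclic code $\mathcal{C}$ over $\mathbb{F}_p$ of length $4p$ with generator polynomial of degree $5$ (so $k=4p-5$ and $n-k+1=6$), with $d_H(\mathcal{C})=3$, and with generator polynomial divisible by $x^4-1$. Granting these three facts, Lemma~\ref{lemdh3} (via its condition (2)) gives $d_p(\mathcal{C})\ge d_H(\mathcal{C})+3=6$, while inequality~(\ref{eqdhdp}) gives $d_p(\mathcal{C})\le 2d_H(\mathcal{C})=6$; hence $d_p(\mathcal{C})=6=n-k+1$, so $\mathcal{C}$ is an AMDS $(4p,6)_p$ symbol-pair code.

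The concrete choice I would make is $\mathcal{C}=\langle g(x)\rangle$ with $g(x)=(x-1)(x^4-1)$, taken in $\mathbb{F}_p[x]/(x^{4p}-1)$. Since $p$ is odd, $\gcd(4,p)=1$ and $x^{4p}-1=(x^4-1)^p$, so $g(x)\mid(x^{4p}-1)$, $\deg g=5$, and $k=4p-5$. Written as a product of distinct monic irreducible factors of $x^4-1$ with multiplicities, $g(x)$ has the factor $x-1$ to multiplicity $2$ and every other irreducible factor of $x^4-1$ to multiplicity $1$. This description is uniform in $p\bmod 4$: for $p\equiv 3\pmod 4$ one has $g=(x-1)^2(x+1)(x^2+1)$ with $x^2+1$ irreducible, while for $p\equiv 1\pmod 4$ the factor $x^2+1$ splits into two distinct linear factors that still occur simply, so the multiplicity pattern (one factor squared, the rest simple) is unchanged; this is convenient because the Hamming-distance computation below does not then branch on $p$.

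To compute $d_H(\mathcal{C})$ I would apply Lemma~\ref{lemdistance} with $l=4$ and $e=1$, running over $0\le t\le p-1$. For $t=0$ the product $\overline{g}_0(x)$ equals $x^4-1$, so $\overline{\mathcal{C}}_0$ is the zero code and its contribution is $\infty$. For $t=1$ only the factor $x-1$ survives, so $\overline{\mathcal{C}}_1=\langle x-1\rangle$ has $d_H=2$ and $P_1=w_H(x-1)=2$, giving contribution $4$. For $t\ge 2$ no irreducible factor of $x^4-1$ survives, so $\overline{\mathcal{C}}_t=\mathbb{F}_p^4$ has $d_H=1$ and the contribution equals $P_t=w_H((x-1)^t)=t+1$ (all $\binom{t}{j}$ with $0\le j\le t$ are nonzero modulo $p$ since $t<p$), which is minimized at $t=2$ with value $3$. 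Hence $d_H(\mathcal{C})=3$; note that $p$ being odd is precisely what makes $(x-1)^2=x^2-2x+1$ genuinely of weight $3$.

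Finally I would check the hypotheses of condition (2) in Lemma~\ref{lemdh3}: $x^4-1\mid g(x)$ by construction, $d_H(\mathcal{C})=3$ is prime, and $2<3<5=4p-k$. This gives $d_p(\mathcal{C})\ge 6$, which combined with $d_p(\mathcal{C})\le 2d_H(\mathcal{C})=6$ yields $d_p(\mathcal{C})=6=n-k+1$, establishing the claim. The only step with genuine content is the Hamming-distance computation: the danger is that some $t$ (most plausibly $t=1$, the length-$4$ ``shadow'' code $\overline{\mathcal{C}}_1$) drives the product below $3$, or that $w_H((x-1)^t)$ misbehaves for some admissible $t$; pinning down that the minimum in Lemma~\ref{lemdistance} is attained exactly at $t=2$ with value $3$ is the \emph{crux}, and everything else is a routine invocation of the quoted lemmas.
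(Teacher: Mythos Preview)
Your proposal is correct and follows essentially the same approach as the paper: the same generator polynomial $g(x)=(x-1)(x^4-1)$, the same appeal to Lemma~\ref{lemdistance} for $d_H=3$, and the same use of condition~(2) of Lemma~\ref{lemdh3} together with the upper bound $d_p\le 2d_H$ to pin down $d_p=6$. Your write-up is in fact more detailed than the paper's, as you actually carry out the minimization over $t$ in Lemma~\ref{lemdistance} rather than simply asserting the parameters.
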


\begin{proof}
Let $\mathcal{C}$ be a repeated-root cyclic code of length $4p$ over $\mathbb{F}_p$ with generator polynomial
\begin{equation*}
g(x)=\left(x-1\right)\left(x^4-1\right).
\end{equation*}
It follows from Lemma \ref{lemdistance} that $\mathcal{C}$ is a $\left[4p,\,4p-5,\,3\right]$ cyclic code.
By Lemma \ref{lemdh3}, one can immediately get $d_{p}\left(\mathcal{C}\right)\geq6$
since $\left(x^4-1\right)\,\big|\,g(x)$ and $2<3=d_{H}(\mathcal{C})<4p-(4p-5)=5$.
Note that
\begin{equation*}
  d_{p}(\mathcal{C})\leq 2\cdot d_{H}(\mathcal{C})=6.
\end{equation*}
Hence $\mathcal{C}$ is an AMDS $\left(4p,\,6\right)_{p}$ symbol-pair code.
This completes the proof.
\end{proof}

In the sequel, four constructions of AMDS symbol-pair codes with length $3p$ are provided.
The first one posses the minimum pair-distance $7$ for any prime $p\geq5$.

\begin{thm}\label{thmAMDS3p7p}
Let $p\geq5$ be an odd prime.
Then there exists an AMDS $\left(3p,\,7\right)_{p}$ symbol-pair code.
\end{thm}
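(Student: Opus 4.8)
The plan is to mimic the structure of the previous two theorems: exhibit a concrete repeated-root cyclic code of length $3p$ over $\mathbb{F}_p$, compute its Hamming parameters via Lemma \ref{lemdistance}, and then bootstrap up to a lower bound on $d_p$ using Lemma \ref{lemdh3}, finally pinning $d_p$ from above by the inequality $d_p(\mathcal{C})\le 2\,d_H(\mathcal{C})$ from (\ref{eqdhdp}). Since we want $d_p=7$ and $d_p\le 2d_H$, we need $d_H\ge 4$, and since we want an AMDS code with $d_p=n-k+1$ and (by Lemma \ref{lemdh3}) a gain of $3$ over $d_H$, the natural target is $d_H(\mathcal{C})=4$... but Lemma \ref{lemdh3} requires $d_H(\mathcal{C})$ to be \emph{prime}, so $d_H=4$ will not trigger it. Thus I expect the intended code to have $d_H(\mathcal{C})=5$, giving $d_p\le 10$ from above and $d_p\ge 5+3=8$ from Lemma \ref{lemdh3} — which overshoots $7$. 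The resolution must be that $d_H(\mathcal{C})=5$ but the code is \emph{not} one of the two cases of Lemma \ref{lemdh3} (so we only get $d_p\ge d_H+2=7$ from Lemma \ref{lemMDS}), and then $2d_H=10$ is too weak, so a direct argument showing $d_p\le 7$ is needed. Concretely, I would take $g(x)=(x-1)^3(x^3-1)$ or $g(x)=(x-1)^t(x-\omega)^{t'}(x-\omega^2)^{t''}$ for a suitable choice of exponents (when $3\mid(p-1)$) making $k=n-5$, and for the case $3\nmid(p-1)$ a code of the form $(x-1)^a(x^2+x+1)^b$; in either case the dimension is $3p-5$, so AMDS means $d_p=3p-(3p-5)+1=6$ — wait, that gives $6$, not $7$. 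So in fact the dimension must be $3p-6$, i.e.\ $\deg g=6$, and then AMDS with $d_p=7$ forces $d_H=4$ (by $d_p=n-k+1$ and a gap of exactly $3$... no, $n-k+1=7$ with $n-k=6$, and we need $d_p\ge d_H+3$ from Lemma \ref{lemdh3} which needs $d_H$ prime, so $d_H=5$ with $\deg g\ge ?$) — I will reconcile this by letting the precise exponents be dictated by the requirement $\deg g(x)=6$ and $d_H(\mathcal{C})=5$.

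First I would fix the generator polynomial: taking $3\mid(p-1)$ is likely assumed implicitly or handled by cases; assume $\omega$ is a primitive cube root of unity in $\mathbb{F}_p$ and set, say, $g(x)=(x-1)^3(x-\omega)^2(x-\omega^2)$ so that $\deg g=6$ and $k=3p-6$. Then I would apply Lemma \ref{lemdistance}: for each $0\le t\le p-1$ compute $P_t=w_H((x-1)^t)$ and $d_H(\overline{\mathcal C}_t)$, where $\overline g_t(x)$ collects the $m_i$ with multiplicity $>t$. For $t=0$: $\overline g_0=(x-1)(x-\omega)(x-\omega^2)=x^3-1$, so $\overline{\mathcal C}_0$ is the zero code, $d_H=\infty$. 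For $t=1$: $\overline g_1=(x-1)(x-\omega)$, a simple-root cyclic code of length $3$ and dimension $1$; its minimum distance is $3$ (repetition-type) or less — I would verify $d_H(\overline{\mathcal C}_1)=3$, and $P_1=2$, giving $6$. For $t=2$: $\overline g_2=(x-1)$, length-$3$ code of dimension $2$, $d_H=2$, $P_2=3$, giving $6$. For $t=3$: $\overline g_3=1$, $d_H=1$, $P_3=w_H((x-1)^3)$; by (\ref{eqpt}) this depends on the $p$-adic digits of $3$ — for $p\ge5$, $P_3=4$, giving $4$. For $t\ge4$: $P_t\ge ?$ and $d_H(\overline{\mathcal C}_t)=1$; I need $P_t\cdot 1\ge$ the minimum, i.e.\ $P_t\ge$ the running min for all $4\le t\le p-1$. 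Taking the minimum over all $t$ should yield $d_H(\mathcal{C})=4$ (from $t=3$) — which then gives $d_p\le 2\cdot4=8$ and $d_p\ge d_H+2=6$ (Lemma \ref{lemMDS}), not tight enough. This shows my first guess for $g$ is wrong; I would instead choose the exponents so that the $t=3$ contribution $P_3\cdot d_H(\overline{\mathcal C}_3)\ge 5$, e.g.\ ensure $\overline g_3$ has an $(x-\omega)$ factor so $d_H(\overline{\mathcal C}_3)\ge 2$ and $P_3\cdot 2\ge 6$, while the other layers contribute $\ge 5$; the honest statement is that the \emph{exponent pattern in $g(x)$ of total degree $6$ is chosen precisely to make $\min_t P_t d_H(\overline{\mathcal C}_t)=5$}, and then $d_H(\mathcal{C})=5$.

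With $d_H(\mathcal{C})=5$ and $k=3p-6$ established: the code is not MDS (since $5<3p-6-1+1$ for $p\ge5$, i.e.\ $5<3p-5$, true), so Lemma \ref{lemMDS} gives $d_p\ge 7$. For the upper bound I would exhibit an explicit codeword of pair-weight $7$: take a minimum-Hamming-weight codeword $c(x)=(x^l-1)^t v(x)$ achieving $d_H=5$ via Lemma \ref{lemdH} ($w_H(c)=P_t N_v$), and analyze the cyclic positions of its $5$ nonzero coordinates inside length $3p$; by choosing $v(x)$ so that the nonzero coordinates are not all isolated (so that $w_p < 2\cdot5=10$) but arranged to give exactly $w_p(c)=7$ — typically, $5$ nonzero symbols occupying positions that form one "block" of consecutive entries plus a few isolated ones, so each isolated symbol contributes $2$ to the pair-weight and each interior symbol of a run of length $\ge2$ contributes $1$ for the overlap. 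I would then compute that this particular arrangement gives $w_p=7$, so $d_p(\mathcal{C})\le 7$, hence $d_p(\mathcal{C})=7$, and $d_p=7=n-k+1=3p-(3p-6)+1$ confirms AMDS. The main obstacle is precisely this last step: constructing the explicit low-pair-weight codeword and verifying its pair-weight is exactly $7$ (not $6$, which would contradict Lemma \ref{lemMDS}, and not $8$ or more), which requires a careful positional analysis of $(x^l-1)^t v(x)$ modulo $x^{3p}-1$; this is the technical heart of the argument, and also where the specific choice of $\omega$-exponents in $g(x)$ and the constraint $p\ge5$ (ensuring $P_3=4$ rather than a smaller $p$-adic value) genuinely matter.
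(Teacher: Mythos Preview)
Your proposal has a genuine gap: the code you ultimately commit to --- a repeated-root cyclic code of length $3p$ with $\deg g(x)=6$ and $d_H(\mathcal{C})=5$ --- does not exist. For $p\ge5$ one has $P_t=t+1$ for $0\le t\le p-1$, and the simple-root cyclic codes of length $3$ have $d_H\in\{1,2,3,\infty\}$. Requiring $\min_t(t+1)\,d_H(\overline{\mathcal{C}}_t)\ge5$ forces (from $t=0$) every irreducible factor of $x^3-1$ to appear in $g$, (from $t=1$) at least two of them with multiplicity $\ge2$, and (from $t=3$) at least one with multiplicity $\ge4$; in the split case $3\mid(p-1)$ this already gives $\deg g\ge 4+2+1=7$, and the inert case $3\nmid(p-1)$ is worse. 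So the ``honest statement'' that the exponents can be tuned to make $d_H=5$ is simply false, and everything downstream of it collapses.

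The paper does not try to avoid $d_H=4$. It takes $g(x)=(x-1)^4(x^2+x+1)$ (uniformly in $p\ge5$, no case split on $3\mid(p-1)$), which by Lemma~\ref{lemdistance} gives a $[3p,\,3p-6,\,4]$ code. Lemma~\ref{lemMDS} then yields only $d_p\ge6$, and you are right that Lemma~\ref{lemdh3} is unavailable since $4$ is not prime. The step you are missing is a direct elimination of pair-weight~$6$: a codeword with $(w_H,w_p)=(5,6)$ would have five consecutive nonzero entries, impossible since $\deg g=6$; a codeword with $(w_H,w_p)=(4,6)$ is handled via Lemma~\ref{lemdH} (writing $c(x)=(x^3-1)^t v(x)$ so that $4=(t+1)N_v$) and ruled out by the derivative constraints $c^{(j)}(1)=0$. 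This is exactly the ``careful positional analysis'' you anticipated, but deployed to exclude $d_p=6$ rather than to certify $d_p=7$. The upper bound $d_p\le7$ then comes from exhibiting a weight-$4$ codeword $1+a_1x+a_2x^i+a_3x^j$ with $(w_H,w_p)=(4,7)$, found from the same derivative system. In short: your instinct that Lemma~\ref{lemdh3} fails here is correct, but the fix is a hand argument at $d_H=4$, not a search for a nonexistent $d_H=5$ code.
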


\begin{proof}
Let $\mathcal{C}$ be a repeated-root cyclic code of length $3p$ over $\mathbb{F}_p$ with generator polynomial
\begin{equation*}
  g(x)=\left(x-1\right)^{4}\left(x^{2}+x+1\right).
\end{equation*}
According to Lemma \ref{lemdistance}, one can obtain that $\mathcal{C}$ is a $\left[3p,\,3p-6,\,4\right]$ cyclic code.
Since $\mathcal{C}$ is not MDS, by Lemma \ref{lemMDS}, one can deduce that $d_{p}\left(\mathcal{C}\right)\geq6$.

Suppose that there exists a codeword $c(x)$ in $\mathcal{C}$ with Hamming weight $5$ such that $c(x)$ has five consecutive nonzero entries.
Without loss of generality, we assume that
\begin{equation*}
  c(x)=c_{0}+c_{1}\,x+c_{2}\,x^{2}+c_{3}\,x^{3}+c_{4}\,x^{4}
\end{equation*}
where $c_{i}\in \mathbb{F}_{p}^{*}$ for any $0\leq i\leq 4$.
This is impossible since the generator polynomial of $\mathcal{C}$ is $g(x)$ and ${\rm deg}(g(x))=6$.
Hence there does not exist a codeword $c(x)$ in $\mathcal{C}$ with $(w_{H}(c(x)),w_{p}(c(x)))=(5,\,6)$.

%In what follows, we claim that there does not exist a codeword in $\mathcal{C}$ with $(w_{H}(c(x)),w_{p}(c(x)))=(5,\,7)$.
%Actually, let $h(x)=(x^3-1)^{b}\,u(x)$ be a codeword in $\mathcal{C}$ with Hamming weight $5$, where $0\leq b\leq p-1$, $(x^3-1)\nmid u(x)$ and $u(x)=u_{0}(x^3)+x\,u_{1}(x^3)+x^{2}\,u_{2}(x^3)$.
%Suppose that $h(x)$ has symbol-pair weight $7$.
%Then its certain cyclic shift must have the form $\left(\star,\,\star,\,\star,\,\mathbf{0},\,\star,\,\star,\,\mathbf{0}\right)$ or
%$\left(\star,\,\star,\,\star,\,\star,\,\mathbf{0},\,\star,\,\mathbf{0}\right)$.
%%where each $\star$ denotes an element in $\mathbb{F}_p^*$ and $\mathbf{0}_s,\,\mathbf{0}_{s'}$ are all-zero vectors.
%It follows from Lemma \ref{lemdH} $(ii)$ that
%\begin{equation*}
%  5=\left(1+b\right)N_{u}\quad {\rm or}\quad 5=\left(1+b\right)N_{u}+1
%\end{equation*}
%where $N_{u}=w_{H}\left(u(x)\,{\rm mod}\,(x^3-1)\right)$.
%There are three cases to be considered: $\left(N_{u},\,b\right)=\left(1,\,2\right)$,
%$\left(N_{u},\,b\right)=\left(1,\,4\right)$ or $\left(N_{u},\,b\right)=\left(2,\,1\right)$.
%Observe that for each case of $\left(N_{u},\,b\right)$, the symbol-pair weight of $h(x)$ cannot be $7$.
%

In order to prove that $\mathcal{C}$ is an AMDS $\left(3p,\,7\right)_p$ symbol-pair code, we need to determine that there does not exist a codeword in $\mathcal{C}$ with $(w_{H}(c(x)),w_{p}(c(x)))=(4,\,6)$ and $(w_{H}(c(x)),w_{p}(c(x)))=(4,\,7)$.

Let $c(x)=\left(x^{3}-1\right)^{t}v(x)$ be a codeword in $\mathcal{C}$ with Hamming weight $4$, where $0\leq t\leq p-1$, $(x^3-1)\nmid v(x)$ and $v(x)=v_{0}(x^3)+x\,v_{1}(x^3)+x^{2}\,v_{2}(x^3)$.
Denote $N_{v}=w_{H}\left(v(x)\,{\rm mod}\,(x^3-1)\right)$.
Then Lemma \ref{lemdH} $(i)$ indicates that $4=\left(1+t\right)N_{v}$.
%Without loss of generality, we assume that the first coordinate of $c(x)$ is $1$.
There are two cases to be considered: $\left(N_{v},\,t\right)=\left(1,\,2\right)$ or $\left(N_{v},\,t\right)=\left(2,\,1\right)$.

Now we first derive that there does not exist a codeword in $\mathcal{C}$ with $(w_{H}(c(x)),w_{p}(c(x)))=(4,\,6)$.
If $\left(N_{v},\,t\right)=\left(1,\,2\right)$, then it is routine to check that the symbol-pair weight of $c(x)$ is greater than $6$.
If $\left(N_{v},\,t\right)=\left(2,\,1\right)$ and $c(x)$ has symbol-pair weight $6$, then
%its corresponding cyclic shift must have the form
%\begin{equation*}
%  c=\left(\star,\,\star,\,\mathbf{0}_s,\,\star,\,\star,\,\mathbf{0}_{s'}\right)
%\end{equation*}
%where each $\star$ denotes an element in $\mathbb{F}_p^*$ and $\mathbf{0}_s,\,\mathbf{0}_{s'}$ are all-zero vectors.
suppose
\begin{equation*}
  c(x)=1+a_{1}\,x+a_{2}\,x^{3i}+a_{3}\,x^{3i+1}
\end{equation*}
for some positive integer $i$ with $1\leq i\leq p-1$ and $a_{1},\,a_{2},\,a_{3}\in \mathbb{F}_{p}^*$.
It follows from $3\,|\,(p-1)$ that $p\nmid 3i$.
The fact $c^{(2)}(1)=c^{(3)}(1)=0$ yields
\begin{equation}\label{eqAMDS3p7p_4-601}
  \left\{
  \begin{array}{l}
  3i\left(3i-1\right)a_{2}+3i\left(3i+1\right)a_{3}=0,\\[2mm]
  3i\left(3i-1\right)\left(3i-2\right)a_{2}+3i\left(3i+1\right)\left(3i-1\right)
  a_{3}=0,\\
  \end{array}
  \right.
\end{equation}
which implies $3i\left(3i+1\right)a_{3}=0$.
If $p\nmid (3i+1)$, then (\ref{eqAMDS3p7p_4-601}) indicates $p\nmid (3i-1)$, a contradiction.
Thus $a_{3}=0$, which is impossible.

For the remaining case, we claim that there exists a codeword in $\mathcal{C}$ with $(w_{H}(c(x)),w_{p}(c(x)))=(4,\,7)$.
If $c(x)$ has symbol-pair weight $7$, then
%its corresponding cyclic shift must have the form
%\begin{equation*}
%  c=\left(\star,\,\star,\,\mathbf{0}_1,\,\star,\,\mathbf{0}_2,\,\star,
%  \,\mathbf{0}_3\right)
%\end{equation*}
%where each $\star$ denotes an element in $\mathbb{F}_p^*$ and $\mathbf{0}_1,\,\mathbf{0}_2,\,\mathbf{0}_3$ are all-zero vectors.
%Suppose
assume
\begin{equation*}
  c(x)=1+a_1\,x+a_2\,x^i+a_3\,x^j
\end{equation*}
for some integers $i,\,j$ with $3\leq i<j-1< n-2$ and $a_{1},\,a_{2},\,a_{3}\in \mathbb{F}_{p}^*$.
It follows from $c(1)=c^{(1)}(1)=c^{(2)}(1)=c^{(3)}(1)=0$ that
\begin{equation}\label{AMDS3p7peq01}
  \left\{
  \begin{array}{l}
  1+a_{1}+a_{2}+a_{3}=0,\\[2mm]
  a_{1}+i\,a_{2}+j\,a_{3}=0,\\[2mm]
  i\left(i-1\right)\,a_{2}+j\left(j-1\right)\,a_{3}=0,\\[2mm]
  i\left(i-1\right)\left(i-2\right)\,a_{2}
  +j\left(j-1\right)\left(j-2\right)\,a_{3}=0.\\
  \end{array}
  \right.
\end{equation}
Denote by $A$ the corresponding coefficient matrix of $\mathbb{F}_{p}$-linear system of equations (\ref{AMDS3p7peq01}).
Then
\begin{equation*}
  \begin{split}
  A=&\left(
  \begin{array}{cccc}
    1   &1             &1             &-1                         \\
    1   &i             &j             &0                          \\
    0   &i(i-1)        &j(j-1)        &0                          \\
    0   &i(i-1)(i-2)   &j(j-1)(j-2)   &0                          \\
  \end{array}\right)\\[2mm]
  \rightarrow&
  \left(\begin{array}{cccc}
    1 &1    &1                                   &-1                      \\
    0 &i-1  &j-1                                 &1                       \\
    0 &0    &\left(j-i\right)\left(j-1\right)    &-i                      \\
    0 &0    &0                                   &ij                      \\
  \end{array}\right).
  \end{split}
\end{equation*}

Therefore, (\ref{AMDS3p7peq01}) has nonzero solutions if $p\,|\,j$ or $p\,|\,i,\, p\,|\left(j-1\right)$ and there exists a codeword in $\mathcal{C}$ with $(w_{H}(c(x)),w_{p}(c(x)))=(4,\,7)$.

Consequently, $\mathcal{C}$ is an AMDS $\left(3p,\,7\right)_{p}$ symbol-pair code.
The proof is completed.
\end{proof}

In the following, a class of AMDS symbol-pair codes with $n=3p$ and $d_{p}=8$ is proposed.

\begin{thm}\label{thmAMDS3p8p}
Let $p$ be an odd prime with $3\,|\left(p-1\right)$.
Then there exists an AMDS $\left(3p,\,8\right)_{p}$ symbol-pair code.
\end{thm}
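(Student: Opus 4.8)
The plan is to mimic the structure of the proof of Theorem \ref{thmAMDS3p7p}, but now aiming for pair-distance $8$ instead of $7$. First I would choose a repeated-root cyclic code $\mathcal{C}$ of length $3p$ over $\mathbb{F}_p$ whose generator polynomial is a product of powers of $x-1$, $x-\omega$, $x-\omega^2$ (with $\omega$ a primitive cube root of unity, which lives in $\mathbb{F}_p$ since $3\mid(p-1)$), chosen so that Lemma \ref{lemdistance} yields $d_H(\mathcal{C})=4$ and $\dim\mathcal{C}=3p-7$ (so that $n-k+1=3p-(3p-7)+1=8$, the AMDS target). A natural candidate is something like $g(x)=(x-1)^5(x-\omega)(x-\omega^2)$ or $(x-1)^4(x^2+x+1)^{3/2}$-type product; concretely one wants the $\overline{\mathcal{C}}_t$ codes and the multipliers $P_t=w_H((x-1)^t)$ from \eqref{eqdistance} to force the minimum over $t$ to equal $4$, and $\deg g=7$. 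I would begin by verifying these two facts (Hamming distance and dimension) via Lemma \ref{lemdistance}.

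The upper bound is immediate: \eqref{eqdhdp} gives $d_p(\mathcal{C})\le 2d_H(\mathcal{C})=8$. So the entire content is the lower bound $d_p(\mathcal{C})\ge 8$. I would first invoke Lemma \ref{lemMDS} to get $d_p(\mathcal{C})\ge d_H+2=6$, and then, if the hypotheses fit, Lemma \ref{lemdh3} to push to $d_p(\mathcal{C})\ge d_H+3=7$ — for this one needs $d_H(\mathcal{C})=4$, but $4$ is not prime, so Lemma \ref{lemdh3} as stated does not apply directly; instead I would argue by hand. The remaining gap is to rule out codewords $c(x)$ with $(w_H(c(x)),w_p(c(x)))$ equal to $(4,6)$, $(4,7)$, $(5,6)$, $(5,7)$, $(6,6)$, and $(6,7)$. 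The cases with $w_p=6$ or $7$ but small $w_H$ are handled exactly as in Theorem \ref{thmAMDS3p7p}: a Hamming-weight-$w$ codeword supported on a short arc of length $\le 7$ would be a polynomial of degree $<7=\deg g$, impossible; and for the genuinely spread-out supports one writes $c(x)=(x^3-1)^t v(x)$, applies Lemma \ref{lemdH} to get $w_H(c(x))=P_t\cdot N_v$, enumerates the feasible pairs $(N_v,t)$, and for each translates $w_p(c(x))\le 7$ into a statement about how tightly the support clusters, then derives a contradiction from the vanishing conditions $c^{(0)}(1)=c^{(1)}(1)=c^{(2)}(1)=c^{(3)}(1)=0$ (and, where the generator has roots at $\omega$, the analogous conditions at $\omega$) by a rank computation on the associated Vandermonde-like coefficient matrix over $\mathbb{F}_p$.

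The main obstacle I anticipate is the $w_H=6$ cases: a weight-$6$ codeword can have symbol-pair weight as low as $6$ or $7$ only if its support splits into very few short blocks (e.g. two blocks of length $3$, or one block of length $3$ and blocks totalling the rest with small "boundary" contribution), and each such support pattern gives a different linear system in the nonzero coefficients whose solvability over $\mathbb{F}_p$ must be excluded, typically forcing some coefficient to be $0$. Organizing this finite case analysis cleanly — using the factorization $c(x)=(x^3-1)^t v(x)$ to reduce the possibilities for $(N_v,t)$ to $(2,2)$, $(3,1)$, $(6,0)$ and then intersecting with the "at most $7$ pair-positions occupied" constraint — is where the real work lies; the determinant/row-reduction arguments, analogous to the matrix $A$ displayed in the proof of Theorem \ref{thmAMDS3p7p}, are then routine. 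Once every such $(w_H,w_p)$ pair with $w_p\le 7$ is eliminated, we conclude $d_p(\mathcal{C})\ge 8$, hence $d_p(\mathcal{C})=8$, and since $n-k+1=8$ as well, $\mathcal{C}$ is an AMDS $(3p,8)_p$ symbol-pair code.
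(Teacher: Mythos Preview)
Your plan diverges from the paper's in the key structural choice. The paper does \emph{not} aim for $d_H(\mathcal{C})=4$; instead it takes
\[
g(x)=(x-1)^4(x-\omega)^2(x-\omega^2),
\]
which has degree $7$ and, by Lemma~\ref{lemdistance}, yields a $[3p,\,3p-7,\,5]$ code. Because $d_H=5$ is prime and $(x^3-1)\mid g(x)$ with $2<5<7$, Lemma~\ref{lemdh3} applies directly and gives $d_p\ge d_H+3=8$ in one line --- no case analysis for the lower bound at all. The price is that the bound $d_p\le 2d_H=10$ from \eqref{eqdhdp} is now too weak, so the paper must instead \emph{produce} an explicit codeword with $w_p=8$; it does this by solving for a weight-$6$ codeword of the shape $1+a_1x+a_2x^2+a_3x^3+a_4x^{3i+1}+a_5x^{3i+2}$.

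Your route, forcing $d_H=4$ (e.g.\ via $g(x)=(x-1)^5(x-\omega)(x-\omega^2)=(x-1)^4(x^3-1)$), reverses this trade-off: the upper bound $d_p\le 8$ comes for free, but since $4$ is not prime Lemma~\ref{lemdh3} is unavailable and you must exclude every support pattern with $w_p\in\{6,7\}$ by hand. Two cautions about your outline. First, Lemma~\ref{lemdH} asserts $w_H(c)=P_t\cdot N_v$ only for codewords of \emph{minimum} Hamming weight, so it does not organize the $w_H=5$ and $w_H=6$ cases as you suggest; those must be handled directly from the vanishing conditions. Second, you have not actually verified that your candidate generator gives $d_p=8$: your code is precisely the subcode $(x-1)\cdot\mathcal{C}'$ of the code $\mathcal{C}'$ in Theorem~\ref{thmAMDS3p7p} (which has $d_p=7$), and whether the extra condition $c^{(4)}(1)=0$ kills every $(4,7)$ codeword found there --- those arising when $p\mid ij$ in the notation of that proof --- is a concrete computation you still owe. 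The paper's choice of $d_H=5$ sidesteps all of this.
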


\begin{proof}
Let $\mathcal{C}$ be a repeated-root cyclic code of length $3p$ over $\mathbb{F}_p$ with generator polynomial
\begin{equation*}
g(x)=\left(x-1\right)^4\left(x-\omega\right)^2\left(x-\omega^2\right)
\end{equation*}
where $\omega$ is a primitive third root of unity in $\mathbb{F}_{p}$.
It can be checked that $\omega\ne -2$.
By Lemma \ref{lemdistance}, we get that $\mathcal{C}$ is a $\left[3p,\,3p-7,\,5\right]$ cyclic code.
By Lemma \ref{lemdh3}, one can immediately get $d_{p}\left(\mathcal{C}\right)\geq8$
since $\left(x^3-1\right)\,\big|\,g(x)$ and $2<5=d_{H}(\mathcal{C})<3p-(3p-7)=7$.

To derive that $\mathcal{C}$ is an AMDS $\left(3p,\,8\right)_{p}$ symbol-pair code, it suffices to show that there exists a codeword in $\mathcal{C}$ with symbol-pair weight $8$.
Now we show that there exists codeword in $\mathcal{C}$ with $(w_{H}(c(x)),w_{p}(c(x)))=(6,\,8)$.

  For the subcase of $c(x)=1+a_{1}\,x+a_{2}\,x^2+a_{3}\,x^{3}+a_{4}\,x^{3i+1}+a_{5}\,x^{3i+2}$ with $2\leq i\leq p-2$ and $a_{i}\in \mathbb{F}_p^{*}$ for any $1\leq i\leq5$.
  It follows from $c\left(1\right)=c\left(\omega\right)=c\left(\omega^2\right)=c^{(1)}\left(1\right)=c^{(1)}\left(\omega\right)=0$ that
  \begin{equation*}
    a_1=-a_4,\quad a_2=-a_5,\quad a_3=-1,\quad a_4=-\frac{\omega}{i},\quad a_5=\frac{\omega+1}{i}
  \end{equation*}
  due to $p\nmid 3i$.
  Then $c^{(2)}\left(1\right)=c^{(3)}\left(1\right)=0$ induces
  \begin{equation*}
  \left\{
  \begin{array}{l}
    2\,a_{2}+6\,a_{3}+3i\left(3i+1\right)a_{4}+
    \left(3i+2\right)\left(3i+1\right)a_{5}=0,\\[2mm]
    6\,a_{3}+3i\left(3i+1\right)\left(3i-1\right)a_{4}
    +3i\left(3i+1\right)\left(3i+2\right)a_{5}=0\\
  \end{array}
  \right.
  \end{equation*}
  which yields that
  \begin{equation*}
    \omega=-\frac{3i+1}{2}=-\frac{3i\left(i+1\right)}{3i+1}.
  \end{equation*}
  Hence $3i^2+1=0$.
  It follows that in this subcase, there exists a codeword
  \begin{equation*}
    c(x)=1+\frac{3i+3}{3i+1}\,x+\frac{3i-2}{3i+1}\,x^2-x^3
    -\frac{3i+3}{3i+1}\,x^{3i+1}-\frac{3i-2}{3i+1}\,x^{3i+2}
  \end{equation*}
  with Hamming weight $6$ and symbol-pair weight $8$.

Therefore, there exists a codeword in $\mathcal{C}$ with $(w_{H}(c(x)),w_{p}(c(x)))=(6,\,8)$.
As a result, $\mathcal{C}$ is an AMDS $\left(3p,\,8\right)_{p}$ symbol-pair code.
This completes the proof.
\end{proof}

%In what follows, a class of AMDS symbol-pair codes with $n=3p$ and $d_{p}=10$ is presented, which can be directly obtained from Theorem \ref{thmMDS3p10p}.
%
%\begin{thm}\label{thmAMDS3p10p}
%Let $p$ be an odd prime with $3\,|\left(p-1\right)$.
%Then there exists an AMDS $\left(3p,\,10\right)_{p}$ symbol-pair code.
%\end{thm}
%
%\begin{proof}
%Let $\mathcal{C}$ be a repeated-root cyclic code of length $3p$ over $\mathbb{F}_p$ with generator polynomial
%\begin{equation*}
%  g(x)=\left(x-1\right)^{2}\left(x-\omega\right)\left(x^3-1\right)^2
%\end{equation*}
%where $\omega$ is a primitive third root of unity in $\mathbb{F}_{p}$.
%By Lemma \ref{lemdistance}, one can immediately get $d_{H}(\mathcal{C})=5$.
%Since $\left(x^3-1\right)\,\big|\,g(x)$ and $2<5=d_{H}(\mathcal{C})<3p-(3p-9)=9$, Lemma \ref{lemdh3} indicates that $d_{p}(\mathcal{C})\geq 8$.
%It follows from the proof of Theorem \ref{thmMDS3p10p} that there does not exist a codeword $c(x)$ in $\mathcal{C}$ with $(w_{H}(c(x)),w_{p}(c(x)))=(5,\,8)$, $(5,\,9)$, $(6,\,8)$, $(6,\,9)$, $(7,\,8)$, $(7,\,9)$ or $(8,\,9)$.
%Therefore, $\mathcal{C}$ is an AMDS $\left(3p,\,10\right)_{p}$ symbol-pair code.
%This proves the desired conclusion.
%\end{proof}

\begin{thm}\label{thmAMDS3p12p}
Let $p$ be an odd prime with $3\,|\left(p-1\right)$.
Then there exists an AMDS $\left(3p,\,12\right)_{p}$ symbol-pair code.
\end{thm}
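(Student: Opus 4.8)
The plan is to take $\mathcal{C}$ to be the repeated-root cyclic code of length $3p$ over $\mathbb{F}_{p}$ with generator polynomial
\begin{equation*}
  g(x)=\left(x-1\right)^{5}\left(x-\omega\right)^{4}\left(x-\omega^{2}\right)^{2},
\end{equation*}
where $\omega$ is a primitive third root of unity in $\mathbb{F}_{p}$. Note that $3\mid\left(p-1\right)$ and $p$ odd force $p\geq 7$, so $g(x)$ divides $x^{3p}-1=\left(x^{3}-1\right)^{p}$ and $\deg g(x)=11$, whence $\dim\mathcal{C}=3p-11$. The first step is to determine $d_{H}(\mathcal{C})$ by Lemma \ref{lemdistance}. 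With the multiplicities $\left(5,\,4,\,2\right)$ attached to $x-1,\,x-\omega,\,x-\omega^{2}$, the associated simple-root cyclic codes $\overline{\mathcal{C}}_{t}$ of length $3$ have generator $x^{3}-1$ for $t\in\left\{0,1\right\}$, generator $\left(x-1\right)\left(x-\omega\right)$ for $t\in\left\{2,3\right\}$, generator $x-1$ for $t=4$, and generator $1$ for $5\leq t\leq p-1$, so that $d_{H}(\overline{\mathcal{C}}_{t})$ equals $\infty,\,3,\,2,\,1$ over these four ranges respectively. Since $P_{t}=t+1$ for $0\leq t\leq p-1$, with $P_{5}=w_{H}\left((x-1)^{5}\right)=6$ because $p\geq 7$, the minimum of $P_{t}\cdot d_{H}(\overline{\mathcal{C}}_{t})$ is $6$, attained at $t=5$; thus $\mathcal{C}$ is a $\left[3p,\,3p-11,\,6\right]$ cyclic code.

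The second step is to get $d_{p}(\mathcal{C})\geq 12$ by a subcode argument. Writing
\begin{equation*}
  g(x)=\left(x-\omega\right)\cdot\left(x-1\right)^{5}\left(x-\omega\right)^{3}\left(x-\omega^{2}\right)^{2},
\end{equation*}
we see that $\mathcal{C}$ is contained in the cyclic code $\mathcal{C}'=\left\langle\left(x-1\right)^{5}\left(x-\omega\right)^{3}\left(x-\omega^{2}\right)^{2}\right\rangle$, which by Lemma \ref{lemML} is an MDS $\left(3p,\,12\right)$ symbol-pair code, so that $d_{p}(\mathcal{C}')=12$. Since $\mathcal{C}$ is linear and $d_{p}(\mathbf{x},\mathbf{y})=w_{p}(\mathbf{x}-\mathbf{y})$, every nonzero codeword of $\mathcal{C}$ is a nonzero codeword of $\mathcal{C}'$ and hence has symbol-pair weight at least $12$; therefore $d_{p}(\mathcal{C})\geq 12$.

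The last step is the matching upper bound: inequality (\ref{eqdhdp}) gives $d_{p}(\mathcal{C})\leq 2\cdot d_{H}(\mathcal{C})=12$, so $d_{p}(\mathcal{C})=12$. As $12=3p-\left(3p-11\right)+1$, the code $\mathcal{C}$ is an AMDS $\left(3p,\,12\right)_{p}$ symbol-pair code, which is the desired conclusion. The point that requires care is the choice of the multiplicity pattern $\left(5,\,4,\,2\right)$ for $g(x)$: it is selected so that at the same time $g(x)$ is $\left(x-\omega\right)$ times the generator of the MDS code of Lemma \ref{lemML} (which pushes $d_{p}$ up to at least $12$) and $d_{H}(\mathcal{C})=6$ (which holds $d_{p}$ down to at most $2d_{H}=12$), so that the two bounds meet. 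The Hamming-distance computation of the first step is otherwise routine, but one must verify term by term that no product $P_{t}\cdot d_{H}(\overline{\mathcal{C}}_{t})$ with $0\leq t\leq p-1$ falls below $6$; the values at $t\in\left\{2,3,4\right\}$ are $9,\,12,\,10$ and the value $P_{5}\cdot 1=6$ is the minimiser.
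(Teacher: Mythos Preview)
Your proof is correct and follows essentially the same approach as the paper: choose a degree-$11$ generator that is a multiple of the generator $(x-1)^{5}(x-\omega)^{3}(x-\omega^{2})^{2}$ from Lemma~\ref{lemML}, compute $d_{H}=6$ via Lemma~\ref{lemdistance}, use the subcode inclusion to get $d_{p}\geq 12$, and cap with $d_{p}\leq 2d_{H}=12$. The only difference is the choice of extra factor---the paper multiplies by $(x-\omega^{2})$ to take $g(x)=(x-1)^{5}(x-\omega)^{3}(x-\omega^{2})^{3}$, whereas you multiply by $(x-\omega)$; both choices give $d_{H}=6$ and the argument is otherwise identical (indeed, your phrasing of the subcode direction ``$\mathcal{C}\subseteq\mathcal{C}'$'' is the correct one).
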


\begin{proof}
Let $\mathcal{C}$ be a repeated-root cyclic code of length $3p$ over $\mathbb{F}_p$ with generator polynomial
\begin{equation*}
  g(x)=\left(x-1\right)^{5}\left(x-\omega\right)^3\left(x-\omega^2\right)^3
\end{equation*}
where $\omega$ is a primitive third root of unity in $\mathbb{F}_{p}$.
By Lemma \ref{lemdistance}, one can immediately get $d_{H}(\mathcal{C})=6$.
Since the code in Lemma \ref{lemML} is a subcode of $\mathcal{C}$ and the minimum symbol-pair distance of the code in Lemma \ref{lemML} is $12$, one can derive that $d_p(\mathcal{C})\geq 12$.
Combining with $d_p(\mathcal{C})\leq 2 d_H(\mathcal{C})=12$, one can obtain that $\mathcal{C}$ is an AMDS $\left(3p,\,12\right)_{p}$ symbol-pair code.
\end{proof}

\begin{thm}\label{thmAMDS3p13p}
Let $p$ be an odd prime with $3\,|\left(p-1\right)$.
Then there exists an AMDS $\left(3p,\,13\right)_{p}$ symbol-pair code.
\end{thm}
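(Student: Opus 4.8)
The plan is to take $\mathcal{C}$ to be the repeated-root cyclic code of length $3p$ over $\mathbb{F}_{p}$ with generator polynomial
\[
  g(x)=(x-1)^{6}(x-\omega)^{3}(x-\omega^{2})^{3}=(x-1)^{3}(x^{3}-1)^{3},
\]
where $\omega$ is a primitive third root of unity in $\mathbb{F}_{p}$, so that $\deg g=12$ and $\dim\mathcal{C}=3p-12$; thus $\mathcal{C}$ is AMDS $(3p,13)_{p}$ precisely when $d_{p}(\mathcal{C})=13$. First I would apply Lemma \ref{lemdistance}. Since the only monic irreducible divisors of $x^{3p}-1$ over $\mathbb{F}_{p}$ are $x-1$, $x-\omega$, $x-\omega^{2}$, one gets $\overline{g}_{t}(x)=x^{3}-1$ (so $d_{H}(\overline{\mathcal{C}}_{t})=\infty$) for $0\le t\le 2$, $\overline{g}_{t}(x)=x-1$ (so $d_{H}(\overline{\mathcal{C}}_{t})=2$) for $3\le t\le 5$, and $\overline{g}_{t}(x)=1$ (so $d_{H}(\overline{\mathcal{C}}_{t})=1$) for $6\le t\le p-1$; as $P_{t}=t+1$ for $t<p$, the minimum in (\ref{eqdistance}) is $P_{6}\cdot 1=7$, attained at $t=6$. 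Hence $\mathcal{C}$ is a $[3p,\,3p-12,\,7]$ cyclic code, and (\ref{eqdhdp}) gives $d_{p}(\mathcal{C})\le 2\,d_{H}(\mathcal{C})=14$.

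Next I would prove $d_{p}(\mathcal{C})\ge 13$. Since $g(x)=(x-1)\cdot(x-1)^{5}(x-\omega)^{3}(x-\omega^{2})^{3}$, the code $\mathcal{C}$ is a subcode of the code in Theorem \ref{thmAMDS3p12p}, whence $d_{p}(\mathcal{C})\ge 12$. To upgrade this to $13$ it suffices to show that $\mathcal{C}$ has no codeword of symbol-pair weight $12$. Such a codeword $c(x)$ would satisfy $7=d_{H}(\mathcal{C})\le w_{H}(c)\le w_{p}(c)=12$, so $w_{H}(c)\in\{7,8,\dots,12\}$. For every value I would use the decomposition $c(x)=(x^{3}-1)^{t}v(x)$ with $(x^{3}-1)\nmid v(x)$: membership $c\in\mathcal{C}$ forces $t\ge 3$ and $t+\mathrm{ord}_{x-1}(v)\ge 6$, and writing $v(x)=v_{0}(x^{3})+x\,v_{1}(x^{3})+x^{2}v_{2}(x^{3})$ decomposes $c$ into three pieces supported in distinct residue classes modulo $3$, so $w_{H}(c)=\sum_{i}w_{H}((y-1)^{t}v_{i}(y))$ in $\mathbb{F}_{p}[y]/(y^{p}-1)$. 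For $w_{H}(c)=7$, Lemma \ref{lemdH} forces $t=6$ and $N_{v}=1$; then the pieces with $(y-1)\mid v_{i}$, if nonzero, lie in $\langle(y-1)^{7}\rangle$ and have Hamming weight $\ge 8$, which is incompatible with $w_{H}(c)=7$, so they all vanish, leaving $c(x)=x^{i_{0}}(x^{3}-1)^{6}v_{i_{0}}(x^{3})$ supported in one residue class modulo $3$, hence with no two adjacent nonzero coordinates and $w_{p}(c)=14\ne 12$. For $8\le w_{H}(c)\le 12$ I would carry out the analogous but more intricate bookkeeping, each sub-case being ruled out by the defining equations $c(1)=c^{(1)}(1)=\cdots=c^{(5)}(1)=0$ and $c(\omega)=c^{(1)}(\omega)=c^{(2)}(\omega)=c(\omega^{2})=c^{(1)}(\omega^{2})=c^{(2)}(\omega^{2})=0$.

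Finally, to obtain $d_{p}(\mathcal{C})\le 13$ I would exhibit a single explicit codeword of symbol-pair weight exactly $13$, in the spirit of the proof of Theorem \ref{thmAMDS3p8p}: choose a support pattern on the $3p$ coordinates whose symbol-pair support has size $13$ (for example a run of about ten consecutive coordinates together with one or two isolated coordinates, the largest of degree $\ge 12$ so that a nonzero codeword with that support can exist), substitute the unknown coordinates into the $12$ linear equations above, and solve; for a suitable choice of the free index one verifies that the system is consistent and that the solution has all prescribed coordinates nonzero. Combining $13\le d_{p}(\mathcal{C})\le 13$ gives $d_{p}(\mathcal{C})=13$, so $\mathcal{C}$ is an AMDS $(3p,13)_{p}$ symbol-pair code.

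The main obstacle is the middle step. Unlike the minimum-weight case, codewords of Hamming weight $8,9,10,11,12$ are not pinned down by Lemma \ref{lemdH}, so each of these five values splits into several sub-cases according to the pair $(t,N_{v})$ and the placement of the nonzero coordinates, and each sub-case needs its own short rank/determinant argument from the derivative conditions at $1$, $\omega$, $\omega^{2}$; assembling a complete and non-redundant list of these sub-cases is the crux. Producing the explicit symbol-pair-weight-$13$ codeword in the last step is also nontrivial, since the corresponding overdetermined linear system must be made consistent with no vanishing coordinate.
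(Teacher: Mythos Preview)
Your proposal is correct and takes essentially the same approach as the paper: the same generator $g(x)=(x-1)^{6}(x-\omega)^{3}(x-\omega^{2})^{3}$, the lower bound $d_{p}\ge 12$ via a subcode relation (the paper invokes Lemma~\ref{lemML} rather than Theorem~\ref{thmAMDS3p12p}, but either works), the exclusion of symbol-pair weight $12$ by a case analysis on $w_{H}\in\{7,\dots,12\}$ using the derivative conditions at $1,\omega,\omega^{2}$, and an explicit codeword of pair weight $13$ (the paper produces one with $(w_{H},w_{p})=(8,13)$). Your treatment of the $w_{H}=7$ case is in fact more detailed than the paper's ``it can be checked'', and you correctly list the cases $w_{H}\in\{11,12\}$, which the paper omits but which are immediate since a single run of at most $11$ consecutive nonzero coordinates has degree below $\deg g=12$.
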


\begin{proof}
Let $\mathcal{C}$ be a repeated-root cyclic code of length $3p$ over $\mathbb{F}_p$ with generator polynomial
\begin{equation*}
  g(x)=\left(x-1\right)^{6}\left(x-\omega\right)^3\left(x-\omega^2\right)^3
\end{equation*}
where $\omega$ is a primitive third root of unity in $\mathbb{F}_{p}$.
By Lemma \ref{lemdistance}, one can immediately get $d_{H}(\mathcal{C})=7$.
Since the code in Lemma \ref{lemML} is a subcode of $\mathcal{C}$ and the minimum symbol-pair distance of the code in Lemma \ref{lemML} is $12$, one can derive that $d_p(\mathcal{C})\geq 12$.
It can be checked that there does not exist codeword in $\mathcal{C}$ with Hamming weight $7$ and symbol-pair weight $12$.

In the sequel, we determine that there does not exist codeword in $\mathcal{C}$ with $(w_{H}(c(x)),w_{p}(c(x)))=(8,\,12)$, $(w_{H}(c(x)),w_{p}(c(x)))=(9,\,12)$ and $(w_{H}(c(x)),w_{p}(c(x)))=(10,\,12)$.

Firstly, assume that there exists a codeword $c(x)$ in $\mathcal{C}$ with $(w_{H}(c(x)),w_{p}(c(x)))=(8,\,12)$.
There are five subcases to be considered:
\begin{itemize}
  \item
  For the subcase of $c(x)=1+a_{1}\,x+a_{2}\,x^{2}+a_{3}\,x^{3}+a_{4}\,x^{4}+a_{5}\,x^{l_{1}}+a_{6}\,x^{l_{2}}+a_{7}\,x^{l_{3}}$ with $7\leq l_{1}+1<l_{2}<l_{3}-1\leq 3p-3$ and $a_{i}\in \mathbb{F}_p^{*}$ for any $1\leq i\leq7$.
  \begin{itemize}
    \item
    If $\mathcal{CS}\left(l_{1},\,l_{2},\,l_{3}\right)\notin\{(0,\,0,\,2),(0,\,1,\,2),(0,\,2,\,2),(1,\,1,\,2),(1,\,2,\,2)\}$, then it can be checked that by $c\left(1\right)=c\left(\omega\right)=c\left(\omega^2\right)=0$, one can immediately get $a_{2}=0$, a contradiction.

    \item
    If $\mathcal{CS}\left(l_{1},\,l_{2},\,l_{3}\right)\in\{(0,\,0,\,2),(0,\,1,\,2),(0,\,2,\,2),(1,\,1,\,2),(1,\,2,\,2)\}$, then $c\left(1\right)=c\left(\omega\right)=c\left(\omega^2\right)=c^{\left(1\right)}\left(1\right)
    =c^{\left(1\right)}\left(\omega\right)=c^{\left(1\right)}\left(\omega^2\right)=0$ yields that $a_{i}=0$ for some $1\leq i\leq4$, a contradiction.
  \end{itemize}

  \item
  For the subcase of $c(x)=1+a_{1}\,x+a_{2}\,x^{2}+a_{3}\,x^3+a_{4}\,x^{l_{1}}+a_{5}\,x^{l_{1}+1}+a_{6}\,x^{l_{2}}+a_{7}\,x^{l_{3}}$ with $7\leq l_{1}+2<l_{2}<l_{3}-1\leq 3p-3$ and $a_{i}\in \mathbb{F}_p^{*}$ for any $1\leq i\leq7$.
  \begin{itemize}
    \item
    If $\mathcal{V}\left(l_{1},\,l_{2},\,l_{3}\right)\in\{(0,\,0,\,0),\,(0,\,0,\,1),\,(0,\,1,\,0),\,(0,\,1,\,1),\,(2,\,0,\,0),\,(2,\,0,\,2)\}$, then by $c\left(1\right)=c\left(\omega\right)=c\left(\omega^2\right)=0$, one can immediately derive $a_{1}=0$ or $a_{2}=0$, a contradiction.

    \item
    If $\mathcal{V}\left(l_{1},\,l_{2},\,l_{3}\right)\notin\{(0,\,0,\,0),\,(0,\,0,\,1),\,(0,\,1,\,0),\,(0,\,1,\,1),\,(2,\,0,\,0),\,(2,\,0,\,2)\}$, then $c^{\left(1\right)}\left(1\right)=c^{\left(1\right)}\left(\omega\right)=c^{\left(1\right)}\left(\omega^2\right)=0$ implies that $a_{i}=0$ for some $1\leq i\leq7$, a contradiction.
  \end{itemize}

  \item
  For the subcase of $c(x)=1+a_{1}\,x+a_{2}\,x^2+a_{3}\,x^{l_{1}}+a_{4}\,x^{l_{1}+1}+a_{5}\,x^{l_{1}+2}+a_{6}\,x^{l_{2}}+a_{7}\,x^{l_{3}}$ with $7\leq l_{1}+3<l_{2}<l_{3}-1\leq 3p-3$ and $a_{i}\in \mathbb{F}_p^{*}$ for any $1\leq i\leq7$.
  For any $7\leq l_{1}+3<l_{2}<l_{3}-1\leq 3p-3$, $c^{\left(1\right)}\left(1\right)=c^{\left(1\right)}\left(\omega\right)=c^{\left(1\right)}\left(\omega^2\right)=0$ implies that $a_{i}=0$ for some $1\leq i\leq7$, a contradiction.

  \item
  For the subcase of $c(x)=1+a_{1}\,x+a_{2}\,x^2+a_{3}\,x^{l_{1}}+a_{4}\,x^{l_{1}+1}+a_{5}\,x^{l_{2}}+a_{6}\,x^{l_{2}+1}+a_{7}\,x^{l_{3}}$ with $6\leq l_{1}+2<l_{2}<l_{3}-2\leq 3p-4$ and $a_{i}\in \mathbb{F}_p^{*}$ for any $1\leq i\leq7$.
  \begin{itemize}
    \item
    If $\mathcal{V}\left(l_{1},\,l_{2},\,l_{3}\right)\in\{(0,\,0,\,0),\,(0,\,0,\,1),\,(1,\,1,\,1),\,(1,\,1,\,2),\,(2,\,2,\,0),\,(2,\,2,\,2)\}$, then by $c\left(1\right)=c\left(\omega\right)=c\left(\omega^2\right)=0$, one can immediately derive $1=0$, $a_{1}=0$ or $a_{2}=0$.

    \item
    If $\mathcal{V}\left(l_{1},\,l_{2},\,l_{3}\right)\in\{(0,\,0,\,0),\,(0,\,0,\,1),\,(1,\,1,\,1),\,(1,\,1,\,2),\,(2,\,2,\,0),\,(2,\,2,\,2)\}$, then $c^{\left(1\right)}\left(1\right)=c^{\left(1\right)}\left(\omega\right)=c^{\left(1\right)}\left(\omega^2\right)=0$ implies that $a_{i}=0$ for some $1\leq i\leq7$, a contradiction.
  \end{itemize}

  \item
  For the subcase of $c(x)=1+a_{1}\,x+a_{2}\,x^{l_{1}}+a_{3}\,x^{l_{1}+1}+a_{4}\,x^{l_{2}}+a_{5}\,x^{l_{2}+1}+a_{6}\,x^{l_{3}}+a_{7}\,x^{l_{3}+1}$ with $5\leq l_{1}+2<l_{2}<l_{3}-2\leq 3p-5$ and $a_{i}\in \mathbb{F}_p^{*}$ for any $1\leq i\leq7$.
  \begin{itemize}
    \item
    If $\mathcal{CS}\left(l_{1},\,l_{2},\,l_{3}\right)\in\{(0,\,0,\,1),\,(0,\,0,\,2),\,(1,\,1,\,1),\,(2,\,2,\,2)\}$, then by $c\left(1\right)=c\left(\omega\right)=c\left(\omega^2\right)=0$, one can immediately derive $1=0$ or $a_{1}=0$ for some  $1\leq i\leq7$.

    \item
    If $\mathcal{CS}\left(l_{1},\,l_{2},\,l_{3}\right)\in\{(0,\,1,\,1),\,(0,\,1,\,2),\,(0,\,2,\,2),\,(1,\,2,\,2)\}$, then $c^{\left(1\right)}\left(1\right)=c^{\left(1\right)}\left(\omega\right)=c^{\left(1\right)}\left(\omega^2\right)=0$ implies that $a_{i}=0$ for some $1\leq i\leq7$, a contradiction.

    \item
    If $\mathcal{CS}\left(l_{1},\,l_{2},\,l_{3}\right)=(0,\,0,\,0)$, then $c\left(1\right)=c\left(\omega\right)
    =c^{\left(1\right)}\left(1\right)=c^{\left(1\right)}\left(\omega\right)=c^{\left(2\right)}\left(1\right)
    =c^{\left(2\right)}\left(\omega\right)=0$ induces
    \begin{equation}\label{eq_AMDS3p12_8-12_222201}
    \left\{
    \begin{array}{l}
      3i\,a_{2}+3j\,a_{4}+3k\,a_{6}=3i\,a_{3}+3j\,a_{5}+3k\,a_{7}=0,\\[2mm]
      3j(3j-3i)\,a_{4}+3k(3k-3i)\,a_{6}=3j(3j-3i)\,a_{5}+3k(3k-3i)\,a_{7}=0
    \end{array}
    \right.
    \end{equation}
    The fact $c^{\left(3\right)}\left(1\right)=0$ indicates
    \begin{equation*}
    \begin{split}
      0&=3i(3i-1)(3i-2)\,a_{2}+3i(3i-1)(3i+1)\,a_{3}+3j(3j-1)(3j-2)\,a_{4}
      \\&+3j(3j-1)(3j+1)\,a_{5}+3k(3k-1)(3k-2)\,a_{6}+3k(3k-1)(3k+1)\,a_{7}
    \end{split}
    \end{equation*}
    which yields
    \begin{equation*}
      a_{4}+a_{5}=0 \quad {\rm and} \quad a_{6}+a_{7}=0
    \end{equation*}
    due to (\ref{eq_AMDS3p12_8-12_222201}).
    It follows from $c^{\left(4\right)}\left(1\right)=0$ that
    \begin{equation*}
    \begin{split}
      0&=3i(3i-1)(3i-2)(3i-3)\,a_{2}+3i(3i-1)(3i-2)(3i+1)\,a_{3}
      \\&+3j(3j-1)(3j-2)(3j-3)\,a_{4}+3j(3j-1)(3j-2)(3j+1)\,a_{5}
      \\&+3k(3k-1)(3k-2)(3k-3)\,a_{6}+3k(3k-1)(3k-2)(3k+1)\,a_{7}.
    \end{split}
    \end{equation*}
    By (\ref{eq_AMDS3p12_8-12_222201}), one can get
    \begin{equation*}
      3k(3k-3i)(3k-3j)\,a_{6}=0.
    \end{equation*}
    This implies $a_{6}=0$, a contradiction.
  \end{itemize}
\end{itemize}

Secondly, assume that there exists a codeword $c(x)$ in $\mathcal{C}$ with $(w_{H}(c(x)),w_{p}(c(x)))=(9,\,12)$.
There are six subcases to be considered:
\begin{itemize}
  \item
  For the subcase of
  $c(x)=1+a_{1}\,x+a_{2}\,x^{2}+a_{3}\,x^{3}+a_{4}\,x^{4}+a_{5}\,x^{5}+a_{6}\,x^{6}+a_{7}\,x^{l_1}+a_{8}\,x^{l_2}$ with $8\leq l_1< l_2-1\leq 3p-3$ and $a_{i}\in \mathbb{F}_p^{*}$ for any $1\leq i\leq8$.
  \begin{itemize}
    \item
    If $\mathcal{V}(l_1,\,l_2)=\{1,\,2\}$, then $c^{\left(1\right)}\left(1\right)=c^{\left(1\right)}\left(\omega\right)
    =c^{\left(1\right)}\left(\omega^2\right)=c^{\left(2\right)}\left(1\right)
    =c^{\left(2\right)}\left(\omega\right)=c^{\left(2\right)}\left(\omega^2\right)=0$ induces
    \begin{equation*}
    \left\{
    \begin{array}{l}
      3\,a_{3}+6\,a_{6}=0,\\[2mm]
      6\,a_{3}+30\,a_{6}=0.
    \end{array}
    \right.
    \end{equation*}
    which indicates that $a_{6}=0$, a contradiction.

    \item
    If $\mathcal{V}(l_1,\,l_2)\ne \{1,\,2\}$, then by a similar arguments to the previous $\mathcal{V}(l_1,\,l_2)=\{1,\,2\}$, one can obtain a contradiction and we omit it here.

  \end{itemize}

  \item
  For the subcase of $c(x)=1+a_{1}\,x+a_{2}\,x^{2}+a_{3}\,x^{3}+a_{4}\,x^{4}+a_{5}\,x^{5}+a_{6}\,x^{l_1}+a_{7}\,x^{l_1+1}+a_{8}\,x^{l_2}$ with $8\leq l_1+1<l_2-1\leq 3p-3$ and $a_{i}\in \mathbb{F}_p^{*}$ for any $1\leq i\leq8$.
  It can be checked that $a_{i}=0$ for some $1\leq i\leq7$, a contradiction.

  \item
  For the subcase of
  $c(x)=1+a_{1}\,x+a_{2}\,x^{2}+a_{3}\,x^{3}+a_{4}\,x^{4}+a_{5}\,x^{l_1}+a_{6}\,x^{l_1+1}+a_{7}\,x^{l_1+2}+a_{8}\,x^{l_2}$ with $8\leq l_1+2<l_2-1\leq 3p-3$ and $a_{i}\in \mathbb{F}_p^{*}$ for any $1\leq i\leq8$.
  It can be checked that $a_{i}=0$ for some $1\leq i\leq7$, a contradiction.

  \item
  For the subcase of
  $c(x)=1+a_{1}\,x+a_{2}\,x^{2}+a_{3}\,x^{3}+a_{4}\,x^{4}+a_{5}\,x^{l_1}+a_{6}\,x^{l_1+1}+a_{7}\,x^{l_2}+a_{8}\,x^{l_2+1}$ with $7\leq l_1+1<l_2-1\leq 3p-4$ and $a_{i}\in \mathbb{F}_p^{*}$ for any $1\leq i\leq8$.
  It can be checked that $a_{i}=0$ for some $1\leq i\leq7$, a contradiction.

  \item
  For the subcase of
  $c(x)=1+a_{1}\,x+a_{2}\,x^{2}+a_{3}\,x^{3}+a_{4}\,x^{l_1}+a_{5}\,x^{l_1+1}+a_{6}\,x^{l_1+2}+a_{7}\,x^{l_1+3}+a_{8}\,x^{l_2}$ with $8\leq l_1+3<l_2-1\leq 3p-3$ and $a_{i}\in \mathbb{F}_p^{*}$ for any $1\leq i\leq8$.
  It can be checked that $a_{i}=0$ for some $1\leq i\leq7$, a contradiction.

  \item
  For the subcase of
  $c(x)=1+a_{1}\,x+a_{2}\,x^{2}+a_{3}\,x^{3}+a_{4}\,x^{l_1}+a_{5}\,x^{l_1+1}+a_{6}\,x^{l_1+2}+a_{7}\,x^{l_2}+a_{8}\,x^{l_2+1}$ with $7\leq l_1+2<l_2-1\leq 3p-3$ and $a_{i}\in \mathbb{F}_p^{*}$ for any $1\leq i\leq8$.
  It can be checked that $a_{i}=0$ for some $1\leq i\leq7$, a contradiction.
  \end{itemize}

Thirdly, assume that there exists a codeword $c(x)$ in $\mathcal{C}$ with $(w_{H}(c(x)),w_{p}(c(x)))=(10,\,12)$.
There are three subcases to be considered:
\begin{itemize}
  \item
  For the subcase of
  $c(x)=1+a_{1}\,x+a_{2}\,x^{2}+a_{3}\,x^{3}+a_{4}\,x^{4}+a_{5}\,x^{5}+a_{6}\,x^{6}+a_{7}\,x^{7}+a_{8}\,x^{8}+a_{9}\,x^{l}$ with $10\leq l\leq 3p-2$ and $a_{i}\in \mathbb{F}_p^{*}$ for any $1\leq i\leq9$.
  \begin{itemize}
    \item
    If $\mathcal{V}(l)=0$, then $c\left(1\right)=c\left(\omega\right)
    =c\left(\omega^2\right)=c^{\left(1\right)}\left(1\right)
    =c^{\left(1\right)}\left(\omega\right)=c^{\left(1\right)}\left(\omega^2\right)==c^{\left(1\right)}\left(1\right)
    =c^{\left(1\right)}\left(\omega\right)=c^{\left(1\right)}\left(\omega^2\right)=0$ induces
    \begin{equation*}
    \left\{
    \begin{array}{l}
      a_{2}+a_{5}+a_{8}=0,\\[2mm]
      2\,a_{2}+5\,a_{5}+8\,a_{8}=0,\\[2mm]
      60\,a_{5}+336\,a_{8}=0.
    \end{array}
    \right.
    \end{equation*}
    which indicates that $a_{8}=0$, a contradiction.

    \item
    If $\mathcal{V}(l)=1$, then by a similar arguments to the previous $\mathcal{V}(l)=0$, one can obtain a contradiction and we omit it here.

    \item
    If $\mathcal{V}(l)=2$, then  by a similar arguments to the previous $\mathcal{V}(l)=0$, one can obtain a contradiction and we omit it here.
  \end{itemize}

  \item
  For the subcase of $c(x)=1+a_{1}\,x+a_{2}\,x^{2}+a_{3}\,x^{3}+a_{4}\,x^{4}+a_{5}\,x^{5}+a_{6}\,x^{6}+a_{7}\,x^{7}+a_{8}\,x^{l}+a_{9}\,x^{l+1}$ with $9\leq l\leq 3p-3$ and $a_{i}\in \mathbb{F}_p^{*}$ for any $1\leq i\leq9$.
  It can be checked that no matter $\mathcal{V}(l)=0$, $\mathcal{V}(l)=1$ or $\mathcal{V}(l)=2$, there exists a contradiction.

  \item
  For the subcase of
  $c(x)=1+a_{1}\,x+a_{2}\,x^{2}+a_{3}\,x^{3}+a_{4}\,x^{4}+a_{5}\,x^{5}+a_{6}\,x^{6}+a_{7}\,x^{l}+a_{8}\,x^{l+1}+a_{9}\,x^{l+2}$ with $8\leq l\leq 3p-4$ and $a_{i}\in \mathbb{F}_p^{*}$ for any $1\leq i\leq9$.
  It can be checked that no matter $\mathcal{V}(l)=0$, $\mathcal{V}(l)=1$ or $\mathcal{V}(l)=2$, one can obtain that $a_i=0$ for some $1\leq i\leq 9$, a contradiction.

  \item
  For the subcase of
  $c(x)=1+a_{1}\,x+a_{2}\,x^{2}+a_{3}\,x^{3}+a_{4}\,x^{4}+a_{5}\,x^{5}+a_{6}\,x^{l}+a_{7}\,x^{l+1}+a_{8}\,x^{l+2}+a_{9}\,x^{l+3}$ with $7\leq l\leq 3p-5$ and $a_{i}\in \mathbb{F}_p^{*}$ for any $1\leq i\leq9$.
  It can be checked that no matter $\mathcal{V}(l)=0$, $\mathcal{V}(l)=1$ or $\mathcal{V}(l)=2$, one can obtain that $a_i=0$ for some $1\leq i\leq 9$, a contradiction.

  \item
  For the subcase of
  $c(x)=1+a_{1}\,x+a_{2}\,x^{2}+a_{3}\,x^{3}+a_{4}\,x^{4}+a_{5}\,x^{l}+a_{6}\,x^{l+1}+a_{7}\,x^{l+2}+a_{8}\,x^{l+3}+a_{9}\,x^{l+4}$ with $8\leq l\leq 3p-4$ and $a_{i}\in \mathbb{F}_p^{*}$ for any $1\leq i\leq9$.
  It can be checked that no matter $\mathcal{V}(l)=0$, $\mathcal{V}(l)=1$ or $\mathcal{V}(l)=2$, one can obtain that $a_i=0$ for some $1\leq i\leq 9$, a contradiction.
  \end{itemize}

To prove that $\mathcal{C}$ is an AMDS $\left(3p,\,13\right)_{p}$ symbol-pair code, it suffices to determine that there exists a codeword in $\mathcal{C}$ with symbol-pair weight $13$.
Now we prove that there exist codeword in $\mathcal{C}$ with $(w_H(c(x)),\,w_p(c(x)))=(8,\,13)$.

For the subcase of $c(x)=1+a_{1}\,x+a_{2}\,x^{l_1}+a_{3}\,x^{l_1+1}+a_{4}\,x^{l_2}+a_{5}\,x^{l_2+1}+a_{6}\,x^{l_3}+a_{7}\,x^{l_4}$ with $7\leq l_1+4<l_2+2<l_3<l_4-1\leq 3p-3$, $\mathcal{V}\left(l_{1},\,l_{2},\,l_{3},\,l_{4}\right)=(0,\,0,\,0,\,1)$ and $a_{i}\in \mathbb{F}_p^{*}$ for any $1\leq i\leq7$.

It follows from $c(1)=c(\omega)=c(\omega^2)=0$ that
\begin{equation*}
  \left\{
  \begin{array}{l}
  1+a_{1}+a_{2}+a_{3}+a_{4}+a_{5}+a_{6}+a_{7}=0,\\[2mm]
  1+a_{1}\,\omega+a_{2}+a_{3}\,\omega+a_{4}+a_{5}\,\omega+a_{6}+a_{7}\,\omega=0,\\[2mm]
  1+a_{1}\,\omega^2+a_{2}+a_{3}\,\omega^2+a_{4}+a_{5}\,\omega^2+a_{6}+a_{7}\,\omega^2=0\\
  \end{array}
  \right.
\end{equation*}
which leads to
\begin{equation}\label{AMDS3p13eq01}
  \left\{
  \begin{array}{l}
  1+a_{2}+a_{4}+a_{6}=0,\\[2mm]
  a_{1}+a_{3}+a_{5}+a_{7}=0.\\
  \end{array}
  \right.
\end{equation}
It can be checked that $a_1=a_4$, $a_2=a_3$, $a_6=a_7$, $a_5=1$, $l_2=2\,l_1$, $l_4=l_3+4$ is a solution of the equations (\ref{AMDS3p13eq01}).
By $c^{(1)}(1)=c^{(1)}(\omega)=c^{(1)}(\omega^2)=0$, one can obtain that
\begin{equation*}
  \left\{
  \begin{array}{l}
  a_{1}+l_1\,a_{2}+\left(l_1+1\right)a_{3}+l_2\,a_{4}+\left(l_2+1\right)a_{5}+l_3\,a_{6}+l_4\,a_{7}=0,\\[2mm]
  a_{1}+l_1\,a_{2}\,\omega^2+\left(l_1+1\right)a_{3}+l_2\,a_{4}\,\omega^2+\left(l_2+1\right)a_{5}
  +l_3\,a_{6}\,\omega^2+l_4\,a_{7}=0,\\[2mm]
  a_{1}+l_1\,a_{2}\,\omega+\left(l_1+1\right)a_{3}+l_2\,a_{4}\,\omega+\left(l_2+1\right)a_{5}
  +l_3\,a_{6}\,\omega+l_4\,a_{7}=0\\
  \end{array}
  \right.
\end{equation*}
which yields
\begin{equation}\label{AMDS3p13eq02}
  \left\{
  \begin{array}{l}
  a_{1}+\left(l_1+1\right)a_{3}+\left(l_2+1\right)a_{5}+l_4\,a_{7}=0,\\[2mm]
  l_1\,a_{2}+l_2\,a_{4}+l_3\,a_{6}=0.\\
  \end{array}
  \right.
\end{equation}
By $a_1=a_4$, $a_2=a_3$, $a_6=a_7$, $a_5=1$, $l_2=2\,l_1$, $l_4=l_3+4$ and (\ref{AMDS3p13eq02}), one can get
\begin{equation}\label{AMDS3p13eq03}
  l_2\left(a_1-1\right)=3\,a_6.
\end{equation}
It follows from $c^{(2)}(1)=c^{(2)}(\omega)=c^{(2)}(\omega^2)=0$ that
\begin{equation*}
  \left\{
  \begin{array}{l}
  l_1\left(l_1-1\right)a_{2}+l_1\left(l_1+1\right)a_{3}+l_2\left(l_2-1\right)a_{4}+l_2\left(l_2+1\right)a_{5}
  +l_3\left(l_3-1\right)a_{6}+l_4\left(l_4-1\right)a_{7}=0,\\[2mm]
  l_1\left(l_1-1\right)a_{2}\,\omega+l_1\left(l_1+1\right)a_{3}\,\omega^2+l_2\left(l_2-1\right)a_{4}\,\omega
  +l_2\left(l_2+1\right)a_{5}\,\omega^2+l_3\left(l_3-1\right)a_{6}\,\omega\\
  +l_4\left(l_4-1\right)a_{7}\,\omega^2=0,\\[2mm]
  l_1\left(l_1-1\right)a_{2}\,\omega^2+l_1\left(l_1+1\right)a_{3}\,\omega+l_2\left(l_2-1\right)a_{4}\,\omega^2
  +l_2\left(l_2+1\right)a_{5}\,\omega+l_3\left(l_3-1\right)a_{6}\,\omega^2\\
  +l_4\left(l_4-1\right)a_{7}\,\omega=0\\
  \end{array}
  \right.
\end{equation*}
which leads to
\begin{equation}\label{AMDS3p13eq04}
  \left\{
  \begin{array}{l}
  l_1\left(l_1-1\right)a_{2}+l_2\left(l_2-1\right)a_{4}+l_3\left(l_3-1\right)a_{6}=0,\\[2mm]
  l_1\left(l_1+1\right)a_{3}+l_2\left(l_2+1\right)a_{5}+l_4\left(l_4-1\right)a_{7}=0.\\
  \end{array}
  \right.
\end{equation}
By $a_1=a_4$, $a_2=a_3$, $a_6=a_7$, $a_5=1$, $l_2=2\,l_1$, $l_4=l_3+4$ and (\ref{AMDS3p13eq01}), one can derive
\begin{equation*}
  \left(2\,l_3-2\,l_1+3\right)a_6=0
\end{equation*}
which leads to
\begin{equation}\label{AMDS3p13eq04}
  2\,l_3+3=2\,l_1
\end{equation}
since $a_6\ne 0$.
It follows from $c^{(3)}(1)=0$ that
\begin{equation*}
  \begin{split}
  &l_1\left(l_1-1\right)\left(l_1-2\right)a_{2}+l_1\left(l_1+1\right)\left(l_1-1\right)a_{3}
  +l_2\left(l_2-1\right)\left(l_2-2\right)a_{4}
  \\&
  +l_2\left(l_2+1\right)\left(l_2-1\right)a_{5}+l_3\left(l_3-1\right)\left(l_3-2\right)a_{6}
  +l_4\left(l_4-1\right)\left(l_4-2\right)a_{7}=0.
  \end{split}
\end{equation*}
By $c^{(4)}(1)=0$, one can get
\begin{equation*}
  \begin{split}
  &l_1\left(l_1-1\right)\left(l_1-2\right)\left(l_1-3\right)a_{2}+l_1\left(l_1+1\right)\left(l_1-1\right)\left(l_1-2\right)a_{3}
  +l_2\left(l_2-1\right)\left(l_2-2\right)\left(l_2-3\right)a_{4}
  \\&
  +l_2\left(l_2+1\right)\left(l_2-1\right)\left(l_2-2\right)a_{5}
  +l_3\left(l_3-1\right)\left(l_3-2\right)\left(l_3-3\right)a_{6}
  +l_4\left(l_4-1\right)\left(l_4-2\right)\left(l_4-3\right)a_{7}=0.
  \end{split}
\end{equation*}
The fact $c^{(5)}(1)=0$ yields that
\begin{equation*}
  \begin{split}
  &l_1\left(l_1-1\right)\left(l_1-2\right)\left(l_1-3\right)\left(l_1-4\right)a_{2}
  +l_1\left(l_1+1\right)\left(l_1-1\right)\left(l_1-2\right)\left(l_1-3\right)a_{3}
  \\&
  +l_2\left(l_2-1\right)\left(l_2-2\right)\left(l_2-3\right)\left(l_2-4\right)a_{4}
  +l_2\left(l_2+1\right)\left(l_2-1\right)\left(l_2-2\right)\left(l_2-3\right)a_{5}
  \\&
  +l_3\left(l_3-1\right)\left(l_3-2\right)\left(l_3-3\right)\left(l_3-4\right)a_{6}
  +l_4\left(l_4-1\right)\left(l_4-2\right)\left(l_4-3\right)\left(l_4-4\right)a_{7}=0.
  \end{split}
\end{equation*}
Hence, there exists a codeword in $\mathcal{C}$ with symbol-pair weight $13$.

Therefore, $\mathcal{C}$ is an AMDS $\left(3p,\,13\right)_{p}$ symbol-pair code.
This proves the desired conclusion.
\end{proof}

In the sequel, we give several examples to illustrate the results in Theorems \ref{thmAMDSlps4q}-\ref{thmAMDS3p13p}.

\begin{example}
(1) Let $\mathcal{C}$ be a repeated-root cyclic code of length $12$ over $\mathbb{F}_3$ with generator polynomial
\begin{equation*}
  g(x)=\left(x-1\right)\left(x-\omega^2\right)\left(x-\omega^{6}\right)
\end{equation*}
where $\omega$ is a primitive element of $\mathbb{F}_{9}$ with $\omega^2+2\omega+2=0$.
It can be verified by MAGMA that $\mathcal{C}$ is a $\left[12,\,9,\,2\right]$ code and the minimum pair-distance of $\mathcal{C}$ is $4$, which coincides with Theorem \ref{thmAMDSlps4q} by taking $l=4$, $s=1$ and $q=3$.

(2) Let $\mathcal{C}$ be a repeated-root cyclic code of length $12$ over $\mathbb{F}_3$ with generator polynomial
\begin{equation*}
  g(x)=\left(x-1\right)\left(x^4-1\right).
\end{equation*}
Then it can be verified by MAGMA that $\mathcal{C}$ is a $\left[12,\,7,\,3\right]$ code and the minimum pair-distance of $\mathcal{C}$ is $6$, which coincides with Theorem \ref{thmAMDS4p6p}.

(3) Let $\mathcal{C}$ be a repeated-root cyclic code of length $15$ over $\mathbb{F}_5$ with generator polynomial
\begin{equation*}
  g(x)=(x-1)^{4}\left(x^2+x+1\right).
\end{equation*}
It can be checked by MAGMA that $\mathcal{C}$ is a $\left[15,\,9,\,4\right]$ code and the minimum pair-distance of $\mathcal{C}$ is $7$, which is consistent with Theorem \ref{thmAMDS3p7p}.

(4) Let $\mathcal{C}$ be a repeated-root cyclic code of length $21$ over $\mathbb{F}_7$ with generator polynomial
\begin{equation*}
  g(x)=(x-1)^{4}\left(x-2\right)^2\left(x-4\right).
\end{equation*}
Then it can be verified by MAGMA that $\mathcal{C}$ is a $\left[21,\,14,\,5\right]$ code and the minimum pair-distance of $\mathcal{C}$ is $8$, which coincides with Theorem \ref{thmAMDS3p8p}.

(5) Let $\mathcal{C}$ be a repeated-root cyclic code of length $21$ over $\mathbb{F}_{7}$ with generator polynomial
\begin{equation*}
  g(x)=\left(x-1\right)^{5}\left(x-2\right)^{3}\left(x-2^2\right)^{3}.
\end{equation*}
It can be verified by MAGMA that $\mathcal{C}$ is a $[21,\,10,\,6]$ code and the minimum symbol-pair distance of $\mathcal{C}$ is $12$, which is consistent with Theorem \ref{thmAMDS3p12p}.

(6) Let $\mathcal{C}$ be a repeated-root cyclic code of length $21$ over $\mathbb{F}_{7}$ with generator polynomial
\begin{equation*}
  g(x)=\left(x-1\right)^{6}\left(x-2\right)^{3}\left(x-2^2\right)^{3}.
\end{equation*}
It can be verified by MAGMA that $\mathcal{C}$ is a $[21,\,9,\,7]$ code and the minimum symbol-pair distance of $\mathcal{C}$ is $13$, which is consistent with Theorem \ref{thmAMDS3p13p}.
\end{example}

\section{Conclusions}

In this paper, we propose six classes of AMDS symbol-pair codes over $\mathbb{F}_{p}$ with $p$ an odd prime by employing repeated-root cyclic codes.
\begin{itemize}
  \item $[lp^{s},\,lp^{s}-3,\,2]$ code with $d_{p}=4$;
  \item $[4p,\,4p-5,\,3]$ code with $d_{p}=6$;
  \item $[3p,\,3p-6,\,4]$ code with $d_{p}=7$;
  \item $[3p,\,3p-7,\,5]$ code with $d_{p}=8$;
  \item $[3p,\,3p-10,\,6]$ code with $d_{p}=12$.
  \item $[3p,\,3p-11,\,7]$ code with $d_{p}=13$.
\end{itemize}
Remarkably, among these codes, a class of AMDS symbol-pair codes with unbounded lengths is derived and the minimum symbol-pair distance can reach $13$.

\end{document}